\definecolor{darkblue}{rgb}{0,0,0.45}
\definecolor{darkred}{rgb}{0.6,0,0}
\definecolor{darkgreen}{rgb}{0.13,0.5,0}
\newcommand{\Gdiamond}[0]{\ensuremath{K_4^-}\xspace}
\newcommand{\house}[0]{\ensuremath{C_5^+}\xspace}
\title{Flexibility of Planar Graphs---Sharpening the Tools to Get Lists of Size Four}
\author[1]{Ilkyoo Choi}
\author[2]{Felix Christian Clemen}
\author[3]{Michael Ferrara}
\author[4]{\\ Paul Horn}
\author[5]{Fuhong Ma}
\author[6,7,8]{Tomáš Masařík}
\affil[1]{Hankuk University of Foreign Studies}
\affil[ ]{\texttt{ilkyoo@hufs.ac.kr}}
\affil[2]{University of Illinois at Urbana-Champaign}
\affil[ ]{\texttt{fclemen2@illinois.edu}}
\affil[3]{University of Colorado Denver}
\affil[ ]{\texttt{michael.ferrara@ucdenver.edu}}
\affil[4]{University of Denver}
\affil[ ]{\texttt{paul.horn@du.edu}}
\affil[5]{Shandong University, Jinan, China}
\affil[ ]{\texttt{mafuhongsdnu@163.com}}
\affil[6]{Charles University, Prague, Czech Republic}
\affil[7]{University of Warsaw, Poland}
\affil[8]{Simon Fraser University, Burnaby, Canada}
\affil[ ]{\texttt{masarik@kam.mff.cuni.cz}}
\newtheorem{theo}{Theorem}
\newtheorem{lemma}[theo]{Lemma}
\newtheorem{question}[theo]{Question}
\newtheorem{obs}[theo]{Observation}
\newtheorem{defn}[theo]{Definition}
\theoremstyle{remark}
\newcommand{\brm}[1]{\operatorname{#1}}
\DeclareMathOperator\df{:=}
\newcommand{\ha}[0]{\cellcolor{blue!25}}
\newcommand{\eah}[0]{\cellcolor{green!25}}
\begin{document}
\date{}
	\maketitle
\begin{textblock}{20}(0, 12.5)
\includegraphics[width=40px]{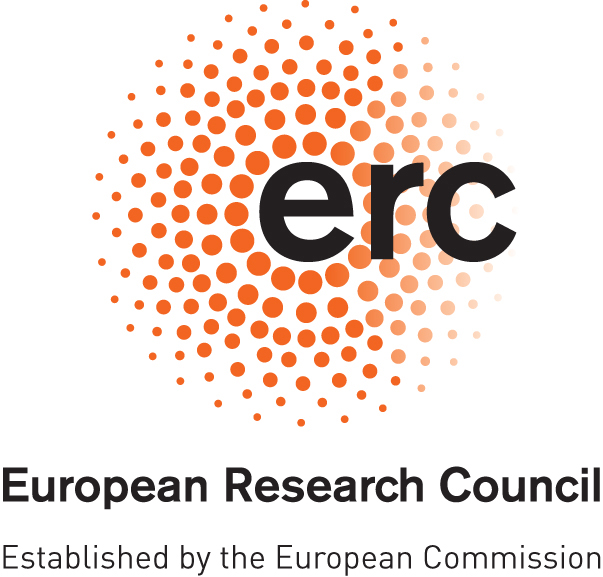}%
\end{textblock}
\begin{textblock}{20}(-0.25, 12.9)
\includegraphics[width=60px]{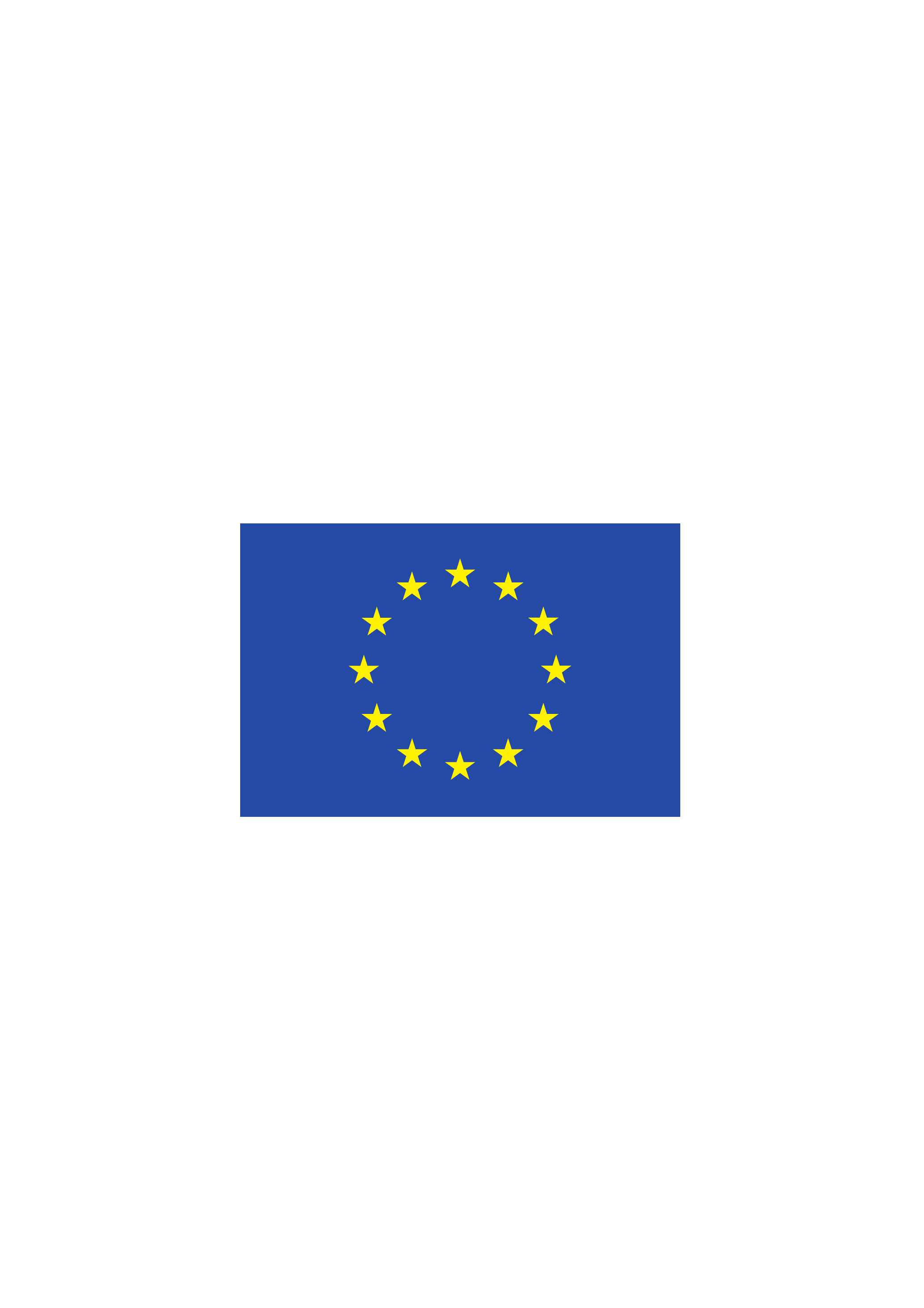}%
\end{textblock}

\begin{abstract}
  A graph where each vertex $v$ has a list $L(v)$ of available colors is {\it $L$-colorable} if there is a proper coloring such that the color of $v$ is in $L(v)$ for each $v$. A graph is {\it $k$-choosable} if every assignment $L$ of at least $k$ colors to each vertex guarantees an $L$-coloring. Given a list assignment $L$, an \textit{$L$-request} for a vertex $v$ is a color $c\in L(v)$. 
  In this paper, we look at a variant of the widely studied class of precoloring extension problems from [Z. Dvo\v{r}\'ak, S. Norin, and L. Postle: List coloring with requests. J. Graph Theory 2019], wherein one must satisfy "enough", as opposed to all, of the requested set of precolors.  A graph $G$ is {\it $\varepsilon$-flexible for list size $k$} if for any $k$-list assignment $L$, and any set $S$ of $L$-requests, there is an $L$-coloring of $G$ satisfying $\varepsilon$-fraction of the requests in $S$.  

It is conjectured that planar graphs are $\varepsilon$-flexible for list size $5$, yet it is proved only for list size $6$ and for certain subclasses of planar graphs.  
We give a stronger version of the main tool used in the proofs of the aforementioned results.   
By doing so, we improve upon a result by Masa\v{r}\'ik and show that planar graphs without $K_4^-$ are $\varepsilon$-flexible for list size $5$.
We also prove that planar graphs without $4$-cycles and $3$-cycle distance at least 2 are $\varepsilon$-flexible for list size $4$. 
Finally, we introduce a new (slightly weaker) form of $\varepsilon$-flexibility where each vertex has exactly one request.
In that setting, we provide a stronger tool and we demonstrate its usefulness to further extend the class of graphs that are $\varepsilon$-flexible for list size $5$. 
\end{abstract}

\section{Introduction}


A proper (vertex) coloring of a graph $G$ is an assignment of colors to the vertices of $G$ such that adjacent vertices receive distinct colors.  A widely studied class of problems in numerous branches of chromatic graph theory is the family of \textit{precoloring extension} problems, wherein the goal is to extend some partial coloring of the graph to a coloring with a desired property.  This general notion was introduced in \cite{PCI,PCII,PCIII}, and has been studied in a breadth of settings (see, for instance, \cite{albertson98listextend,AlbHutchPCplanar, albertson2004precoloring, AlbWestCircPC06, AxHutchListPC11, BrewNoelCircPC12, FGHSW, PruVoPC09,TuzaPCsurvey}).
Graph coloring with preferences have many application in various fields of computer science, such as scheduling~\cite{schedule}, register allocation~\cite{registry}, resource management~\cite{resource} and many others.


Motivated by the work of Dvořák and Sereni in \cite{dotri},
Dvořák, Norin, and Postle~\cite{Dvorakmain}
introduced the related concept of \emph{flexibility}.
The question of interest is the following: if some vertices of the graph have a preferred color, then is it possible to properly color the graph so that at least a constant fraction of the preferences are satisfied?  While this is not a precoloring extension problem in the classical sense, the idea of retaining a set of preferred, as opposed to prescribed, colors establishes a clear link between these problems.
This question is trivial in the usual proper coloring setting with a fixed number of colors, as we can permute the colors in a proper $k$-coloring of a graph in order to satisfy at least a $1/k$-fraction of the requests~\cite{Dvorakmain}.  On the other hand, in the setting of list coloring, the concept of flexibility gives rise to a number of interesting problems.

Before continuing, we present some formalities necessary for the results that follow.  A \emph{list assignment} $L$ for a graph $G$ is a function that assigns a set $L(v)$ of colors to each vertex $v\in V(G)$, and an \emph{$L$-coloring} is a proper coloring $\varphi$ such that $\varphi(v)\in L(v)$ for all $v\in V(G)$.
A graph $G$ is \emph{$k$-choosable} if $G$ is $L$-colorable from every list assignment $L$ where each vertex receives at least $k$ colors. 
The \emph{choosability} of a graph $G$ is the minimum $k$ such that $G$ is $k$-choosable.
A \emph{weighted request} is a function $w$ that assigns a non-negative real number to each pair $(v,c)$ where $v\in V(G)$ and $c\in L(v)$.  
Let $\displaystyle w(G,L)=\sum_{v\in V(G),c\in L(v)} w(v,c)$.
For $\varepsilon>0$, we say that $w$ is \emph{$\varepsilon$-satisfiable} if there exists an $L$-coloring $\varphi$ of $G$ such that
\[
 \sum_{v\in V(G)} w(v,\varphi(v))\ge\varepsilon\cdot w(G,L).
\]

An important special case is when at most one color can be requested at each vertex and all such colors have the same weight.
A \emph{request} for a graph $G$ with list assignment $L$ is a function $r$ with $\brm{dom}(r)\subseteq V(G)$ such that $r(v)\in L(v)$ for all $v\in\brm{dom}(r)$.
If each vertex requests exactly one color, i.e., $\brm{dom}(r)=V(G)$, then such a request is \emph{widespread}.
For $\varepsilon>0$, a request $r$ is \emph{$\varepsilon$-satisfiable} if there exists an $L$-coloring $\varphi$ of $G$ such that at least $\varepsilon|\brm{dom}(r)|$ vertices $v$ in $\brm{dom}(r)$ receive color $r(v)$.  In particular, a request $r$ is $1$-satisfiable if and only if the precoloring given by $r$ extends to an $L$-coloring of $G$.

We say that a graph $G$ with list assignment $L$ is \emph{$\varepsilon$-flexible}, \emph{weakly  $\varepsilon$-flexible}, and \emph{weighted $\varepsilon$-flexible} if every request, widespread request, and weighted request, respectively, is $\varepsilon$-satisfiable.
\footnote{Note that one can define a weighted variant of the weak flexibility by allowing only such requests that the sum of the weights for each vertex is exactly $1$, as was pointed out by an anonymous reviewer.
However, we will not consider this variant in the paper.}
If $G$ is (weighted/weakly) $\varepsilon$-flexible for every list assignment with lists of size $k$, then $G$ is {\it (weighted/weakly) $\varepsilon$-flexible for lists of size $k$.}

The main meta-question is whether there exists a universal constant $\varepsilon>0$ such that all $k$-choosable graphs are (weighted) $\varepsilon$-flexible for lists of size $k$.
In the paper that introduced flexibility, Dvořák, Norin, and Postle~\cite{Dvorakmain} established some basic properties and proved several theorems in terms of degeneracy and maximum average degree.
 A graph $G$ is $\emph{d-degenerate}$ if every subgraph has a vertex of degree at most $d$, and the \emph{degeneracy} of $G$ is the minimum $d$ such that $G$ is $d$-degenerate. 
\begin{theo}[Dvo{\v{r}}{\'a}k, Norin, and Postle \cite{Dvorakmain}]\label{thm:degeneracy}
For every $d\ge 0$, there exists $\varepsilon>0$ such that $d$-degenerate graphs are weighted $\varepsilon$-flexible for lists of size  $d+2$. 
\end{theo}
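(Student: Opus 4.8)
The plan is to reduce the statement to the construction of a single ``spread-out'' random $L$-coloring, and then to produce that coloring by a randomized greedy procedure read off a degeneracy ordering. The first observation is that it suffices to find a constant $\xi=\xi(d)>0$ together with a probability distribution $\mathcal D$ over $L$-colorings of $G$ such that $\Pr_{\varphi\sim\mathcal D}[\varphi(v)=c]\ge\xi$ for every vertex $v$ and every color $c\in L(v)$. Indeed, given such a $\mathcal D$ and an arbitrary weighted request $w$ (the case $w(G,L)=0$ being trivial), linearity of expectation gives
\[
\mathbb E_{\varphi\sim\mathcal D}\Big[\textstyle\sum_{v\in V(G)}w(v,\varphi(v))\Big]=\sum_{v\in V(G)}\sum_{c\in L(v)}w(v,c)\,\Pr_{\varphi\sim\mathcal D}[\varphi(v)=c]\ge\xi\,w(G,L),
\]
so some $L$-coloring in the support of $\mathcal D$ witnesses that $w$ is $\xi$-satisfiable; hence $\varepsilon:=\xi$ works.

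To build $\mathcal D$, I would take a degeneracy ordering of $G$ and reverse it to obtain an ordering $u_1,\dots,u_n$ of $V(G)$ in which every $u_j$ has at most $d$ neighbors among $u_1,\dots,u_{j-1}$ (its \emph{back-neighbors}). Color the vertices in this order: upon reaching $u_j$, let $A_j\subseteq L(u_j)$ be the set of colors not already used on a back-neighbor of $u_j$, and set $\varphi(u_j)$ to be a uniformly random element of $A_j$. Since $|L(u_j)|\ge d+2$ and $u_j$ has at most $d$ back-neighbors, $|A_j|\ge 2$ throughout, so the process never aborts and visibly outputs a proper $L$-coloring; let $\mathcal D$ be the law of the output.

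It remains to prove the marginal bound, and this is the step I expect to be the main obstacle. Conditioning on the colors of $u_1,\dots,u_{j-1}$, the conditional probability that $\varphi(u_j)=c$ equals $\mathbf 1[c\in A_j]/|A_j|\ge\mathbf 1[c\in A_j]/(d+2)$, so it is enough to show $\Pr[c\in A_j]\ge 2^{-d}$, that is, that with probability at least $2^{-d}$ none of the (at most $d$) back-neighbors of $u_j$ receives color $c$. A naive union bound over the back-neighbors is useless here, since one only controls each $\Pr[\varphi(w)=c]$ from below rather than from above. The way around this is to reveal the back-neighbors' colors one at a time, in increasing order of index, using the structural fact that \emph{every} vertex $w$ satisfies $\Pr[\varphi(w)=c\mid\text{everything determined before }w\text{ is colored}]\le\tfrac12$, simply because $|A_w|\ge 2$ always and $\varphi(w)$ is uniform on $A_w$. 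Multiplying these conditional estimates over the at most $d$ back-neighbors shows that $c$ survives in $A_j$ with probability at least $(1/2)^d$, so $\Pr[\varphi(u_j)=c]\ge\frac{1}{(d+2)2^d}$. Taking $\xi=\frac{1}{(d+2)2^d}$ then finishes the proof; the only delicate point is making the sequential-conditioning argument precise, i.e.\ ordering the reveals so that each conditioning event lies in the $\sigma$-algebra already exposed.
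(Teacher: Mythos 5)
Your proof is correct. The reduction to a single random $L$-coloring with uniformly bounded marginals is exactly Lemma~\ref{lemma:distrib}, and the remaining steps are sound: conditioning on the colors of the first $t-1$ vertices, the vertex $u_t$ takes any fixed color $c$ with conditional probability $\mathbf{1}[c\in A_t]/|A_t|\le 1/2$ since $|A_t|\ge 2$ always; and since the event that the first $i-1$ back-neighbours of $u_j$ avoided $c$ is measurable with respect to the history available when the $i$-th one is colored, the chain rule legitimately multiplies these bounds to give $\Pr[c\in A_j]\ge 2^{-d}$ and hence $\Pr[\varphi(u_j)=c]\ge 2^{-d}/(d+2)$. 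One pedantic point: that last inequality uses $|A_j|\le d+2$, so you should take lists of size exactly $d+2$ rather than at least $d+2$ --- which is the right convention for weighted flexibility anyway, since arbitrarily large lists make a uniform weighted guarantee impossible.

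The paper itself only cites this theorem; its own contribution is the generalized machinery of Lemma~\ref{lem:evenImproved}, of which your statement is the special case where every reducible configuration is a single vertex ($b=1$, $\mathcal{F}=\emptyset$, $k=d+2$, empty boundary). The structural difference is that the paper (following Dvo\v{r}\'{a}k--Norin--Postle) argues by induction on the resolution, extending a coloring of the smaller graph by a uniformly random admissible coloring of the removed piece, and maintains \emph{two} invariants: (i) the marginal lower bound, and (ii) that any prescribed set $I$ of at most $k-2$ vertices simultaneously avoids any given color with probability at least $p^{|I|}$. Your forward ``reveal the back-neighbours one at a time'' argument replaces invariant (ii), and because you bound the conditional probability of a \emph{specific} color by $1/|A_w|\le 1/2$ rather than by the $1/k$ one gets from counting admissible colorings, you obtain a slightly better constant, $2^{-d}/(d+2)$ versus roughly $(d+2)^{-(d+1)}$. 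The price is that your factorization is tied to single-vertex configurations: when the reducible pieces contain several vertices, the joint event that a set $I$ of external neighbours avoids $c$ no longer factors vertex-by-vertex, which is precisely why the paper needs the (FORB) condition and invariant (ii). For the theorem as stated, your more elementary route is perfectly adequate.
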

Compare this result with the corresponding trivial greedy bound for choosability; namely, $d$-degenerate graphs are $d+1$ choosable.
Since planar graphs are $5$-degenerate, Theorem~\ref{thm:degeneracy} implies that
there exists $\varepsilon>0$ such that every planar graph is weighted $\varepsilon$-flexible with lists of size $7$.
In the same paper~\cite{Dvorakmain}, two bounds in terms of the maximum average degree were developed, one of which implies that there exists $\varepsilon>0$ such that every planar graph is $\varepsilon$-flexible with lists of size $6$.
Since planar graphs are $5$-choosable~\cite{ch-general} and there exists a planar graph that is not $4$-choosable~\cite{voigt1993}, the following question is natural:

\begin{question}\label{q:general}
Does there exist $\varepsilon>0$ such that every planar graph is (weighted) $\varepsilon$-flexible for lists of size 5?
\end{question}

Dvořák, Masařík, Musílek, and Pangrác \cite{req-triangle} showed that planar graphs of girth at least 4 are weighted $\varepsilon$-flexible for lists of size $4$.
This is tight as planar graphs of girth at least 4 are $3$-degenerate, and hence $4$-choosable, and there exists a planar graph of girth 4 that is not 3-choosable~\cite{glebov,voigt1995}.
They also showed that planar graphs of girth at least 6 are weighted $\varepsilon$-flexible for lists of size $3$~\cite{req-six}.
There is still a gap, in terms of girth constraints, since planar graphs of girth at least 5 are 3-choosable~\cite{ch-girth5}.  

Masařík made further progress towards Question~\ref{q:general} by showing that planar graphs without\footnote{Note that in the whole paper \emph{without} refers to an ordinary subgraph relation.} 4-cycles are weighted $\varepsilon$-flexible for lists of size 5~\cite{req-four}.
This raises the natural corresponding question for planar graphs without 4-cycles, which are known to be 4-choosable~\cite{chsq-free}.

\begin{question}\label{q:c4}
Does there exist $\varepsilon>0$ such that every planar graph without $C_4$ is (weighted) $\varepsilon$-flexible for lists of size 4?
\end{question}

\subsection{Our Results}

We prove a strengthening of the result in~\cite{req-four} towards solving Question~\ref{q:general}.
Wang and Lih~\cite{2002WaLih} conjectured that planar graphs without \Gdiamond ($K_4$ without an edge, also known as a diamond) are $4$-choosable. This conjecture remains open. 
Improving the result in~\cite{req-four}, we prove that planar graphs without \Gdiamond are weighted $\varepsilon$-flexible for lists of size $5$. 
This is the largest subclass of planar graphs that is known to be weighted $\varepsilon$-flexible for lists of size $5$. 

\begin{theo}
\label{diamond}
There exists $\varepsilon > 0$ such that every planar graph without $K_4^-$ is weighted $\varepsilon$-flexible for lists of size 5.
\end{theo}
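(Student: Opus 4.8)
The plan is to derive Theorem~\ref{diamond} from two ingredients: a general ``flexibility tool'' reducing weighted $\varepsilon$-flexibility to the presence of a linear number of bounded-size local configurations that can absorb a request, and a discharging argument showing that every planar graph without $K_4^-$ contains enough such configurations with respect to every list assignment of size~$5$. This follows the architecture of the proofs for planar graphs of girth at least~$4$~\cite{req-triangle} and for planar graphs without $C_4$~\cite{req-four}; the essential point is that the tool used there does not suffice for the strictly larger class of $K_4^-$-free planar graphs, so I would first isolate and prove a sharper version of it and only then run the discharging.

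For the tool, I would formalize a \emph{good configuration} as a connected subgraph $H\subseteq G$ with a distinguished \emph{core} $z$ whose every neighbor lies in $H$, with $|V(H)|$ bounded by an absolute constant, such that for every size-$5$ list assignment $L$, every proper $L$-coloring $\psi$ of $G-(V(H)\setminus\partial_G H)$, and all but a bounded number of colors $c\in L(z)$, the coloring $\psi$ extends to an $L$-coloring of $G$ in which $z$ receives $c$. The tool then reads: for every $k$ and every bound on $|V(H)|$ there is $\varepsilon>0$ so that if, for each size-$k$ list assignment, a constant fraction of the vertices of $G$ are cores of good configurations with pairwise bounded overlap, then $G$ is weighted $\varepsilon$-flexible for lists of size $k$. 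The sharpening over the existing tool is exactly in (i) replacing ``the core can be fixed to every color of its list'' by ``to all but a bounded number of colors'' and (ii) allowing distinct configurations to share boundary vertices rather than being mutually far apart; both relaxations are what let the configurations coexist with triangles in the discharging to come. The tool itself I would prove by the probabilistic argument of Dvořák, Norin, and Postle~\cite{Dvorakmain}, adapted to the weakened fixing hypothesis: start from an $L$-coloring (available since planar graphs, hence $K_4^-$-free planar graphs, are $5$-choosable~\cite{ch-general}), resample locally and independently around each configuration to attempt its request, and bound the expected weight of satisfied requests by linearity of expectation using the bounded size and overlap of the configurations, with the weakened fixing hypothesis costing only a bounded multiplicative factor.

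For the structural half, I would work directly with a planar graph $G$ without $K_4^-$ and an arbitrary size-$5$ list assignment $L$, assign to each vertex $v$ the charge $\deg(v)-4$ and to each face $f$ the charge $\ell(f)-4$ (total charge $-8$ by Euler's formula), and use the $K_4^-$-free condition --- which forces every edge to lie in at most one triangle, so no two triangular faces share an edge and $G$ is sparse, with average degree below $\tfrac{24}{5}$ --- to control how triangular faces distribute charge. I would then compile a catalog of reducible configurations built around vertices of degree at most $4$ together with a bounded number of incident triangles and their closed neighborhoods, verify that each such configuration is a good configuration for list size $5$ in the weakened sense above, and design discharging rules (charge flowing from large-degree vertices and long faces toward the potential cores, modulated by incidences with triangles) so that if $G$ had only sublinearly many cores of good configurations the charges could not sum to $-8$. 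This yields the required linear lower bound on the number of good cores, and the tool then delivers weighted $\varepsilon$-flexibility.

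The main obstacle I anticipate lives at the interface of the two halves and is, at heart, the reason the previous tool was insufficient. Weakening ``without $C_4$'' to ``without $K_4^-$'' permits genuinely more triangular structure --- in particular a vertex may lie in arbitrarily many triangles (a windmill) even though no two share an edge --- so the reducible-configuration analysis of~\cite{req-four} must be rebuilt with a larger catalog and more careful charge transfers around triangles, and making the discharging close is where essentially all of the omitted calculation sits. Simultaneously, one must check that this enlarged catalog really consists of good configurations for list size exactly $5$: that in each configuration the core can be recolored to all but boundedly many of its five colors under any admissible coloring of the rest, and that the configurations produced by the discharging satisfy the bounded-overlap hypothesis of the sharpened tool. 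Because the precise parameters of the tool (how much overlap, how many colors may fail) dictate which configurations the discharging may use, these two sides have to be developed in tandem rather than in sequence.
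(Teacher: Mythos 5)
Your overall architecture (a local-reducibility tool plus discharging) matches the paper's, but the tool you propose is materially weaker than what the theorem needs, in two ways. First, for \emph{weighted} flexibility the adversary may place all of the weight on a single pair $(v,c)$, so the underlying probabilistic lemma must guarantee $\mathbf{Prob}[\varphi(v)=c]\ge\varepsilon$ for \emph{every} vertex $v$ and \emph{every} $c\in L(v)$; a tool whose hypothesis is only that ``a constant fraction of the vertices are cores'' and whose conclusion at a core is only that it can be recolored to ``all but a bounded number'' of its colors gives nothing for the excluded vertices and the excluded colors, and hence proves at best weak flexibility for widespread requests. The paper instead builds an iterative \emph{resolution} that exhausts all of $V(G)$ by peeling off boundary-reducible subgraphs, and its two conditions interact: (FORB) guarantees that, with probability at least $p^{k-2}$, the already-colored outside neighbors of a vertex $v$ of the configuration all avoid a prescribed color $c$, and only then does (FIX) fix $v$ to $c$. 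The correct sharpening is therefore not ``fewer colors need be fixable'' but ``fewer sets $I$ need satisfy (FORB)''---only the $\mathcal{F}$-forbidding ones, i.e.\ those $I$ to which one could attach a common neighbor without creating a forbidden subgraph. That is precisely what makes the decisive configuration reducible here: in a triangle $uvw$ with $\deg(u)=\deg(v)=4$, the set $\{u,v\}$ cannot be $\{K_4^-\}$-forbidding, so (FORB) only has to be checked for singletons.

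Second, the structural half of your argument is not carried out, and the difficulty you anticipate (a vertex lying on many triangles) evaporates: since no two triangles share an edge, a $d$-vertex lies on at most $\lfloor d/2\rfloor$ triangular faces. With charges $\deg(v)-4$ and $|f|-4$, the paper needs only two reducible configurations (a $3^-$-vertex, and a triangle with two $4$-vertices) and a single rule (each $5^+$-vertex sends $1/2$ to each incident $3$-face): every $3$-face is then incident with at least two $5^+$-vertices and has final charge at least $-1+2\cdot\frac{1}{2}=0$, while a $5^+$-vertex retains $\deg(v)-4-\frac{1}{2}\lfloor\deg(v)/2\rfloor\ge 0$. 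No catalog of configurations built around low-degree vertices and their incident triangles, and no counting of a linear number of cores, is required once the tool is sharpened in the right direction.
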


We also investigate Question~\ref{q:c4}; $\varepsilon$-flexibility for two subclasses of planar graphs without $4$-cycles, and show that lists of size 4 are actually sufficient.

\begin{theo}
\label{diamondtriangle}
There exists $\varepsilon > 0$ such that every planar graph without $C_4$ and with $C_3$ distance at least 2 is weighted $\varepsilon$-flexible for lists of size 4.
\end{theo}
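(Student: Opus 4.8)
The plan is to derive Theorem~\ref{diamondtriangle} from the sharpened form of the flexibility tool established in this paper, which reduces weighted $\varepsilon$-flexibility for lists of size $k$ to a purely structural statement: there is a constant $c>0$ for which \emph{every} subgraph of the graph in question contains at least $c$ times its order many (pairwise far-apart) copies of one of a fixed finite list of \emph{reducible configurations} — small connected subgraphs whose vertices carry enough combined list-colouring freedom for $k$ colours. For $k=4$ these are structures like a vertex of degree at most $2$, a pair of adjacent degree-$3$ vertices, or a short path of degree-$3$ vertices, i.e.\ exactly the configurations that the old tool could not absorb but the sharpened one can. Since deleting edges or vertices cannot create a new $4$-cycle and cannot decrease the distance between two $3$-cycles, the class of planar graphs without $C_4$ and with $C_3$-distance at least $2$ is closed under taking subgraphs; hence it suffices to prove that every planar graph $G$ in the class has a linear number of such reducible configurations.

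To exhibit them I would run a discharging argument on a plane embedding of $G$, with initial charge $\mathrm{ch}(v)=d(v)-4$ on each vertex and $\mathrm{ch}(f)=\ell(f)-4$ on each face, so that by Euler's formula the total charge is $-8$. Forbidding $4$-cycles forces every face to have length $3$ or at least $5$, so a face of length at least $5$ has charge at least $1$ while a triangular face has charge $-1$; the hypothesis that any two $3$-cycles are at distance at least $2$ forces triangular faces to be vertex-disjoint and pairwise non-adjacent, which already yields (via the same Euler count, using that each triangular face is incident to at most $n/3$ vertices overall) that $G$ is $3$-degenerate, and, more usefully, that every triangular face is hemmed in on all sides by large faces with charge to spare. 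I would then design local rules that push charge from large faces to their incident vertices and to the few nearby triangular faces, and from high-degree vertices toward their low-degree neighbours, tuned so that after discharging every vertex and every face that is not within bounded distance of one of the designated reducible configurations ends with non-negative charge. Since the total charge is still $-8$, a linear number of reducible configurations must occur; upgrading this to a constant fraction that are pairwise separated (as the tool wants) is routine once the rules are local, via a standard packing argument.

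The step I expect to be the main obstacle is the discharging analysis itself, because list size $4$ is essentially best possible in this subclass: a graph in the class may be $3$-regular with no triangle at all (think of the dodecahedron), so there need be no degree-$\le 2$ vertex whatsoever, and the argument must instead always locate a \emph{cluster} of degree-$3$ (and degree-$4$) vertices forming one of the new reducible configurations. The delicate regions are therefore the dense patches of degree-$3$ and degree-$4$ vertices surrounded almost entirely by $5^{+}$-faces; the point at which the hypothesis earns its keep is that the worst offenders for $C_4$-free planar graphs — tightly packed families of triangles — are outlawed by the triangle-distance condition, which is precisely why $k=4$ is attainable here though not (as far as is known) for all $C_4$-free planar graphs. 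A secondary, finite-but-tedious task is to check that each configuration in the final list genuinely satisfies the hypotheses of the sharpened tool, i.e.\ that after its deletion every list colouring of the remainder extends while retaining the required amount of freedom at the configuration's vertices.
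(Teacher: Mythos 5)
Your high-level framework --- find a finite list of boundary-reducible configurations, then use discharging on a minimal counterexample to show one must occur --- is indeed the paper's strategy, but the proposal has both a mischaracterization of the tool and, more seriously, gaps in the two places where all the actual work lies. First, Lemma~\ref{lem:evenImproved} does not ask for ``$c$ times the order many pairwise far-apart copies'' of reducible configurations; it asks for an $(\mathcal{F},k,b)$-resolution, which one obtains by showing that \emph{every} graph in the class contains at least \emph{one} $(\mathcal{F},4)$-boundary-reducible subgraph and noting that the class is closed under vertex deletion, so the process can be iterated. Your ``packing argument'' for separated copies is not needed and is not what the tool consumes, so that part of the plan is solving the wrong problem (harmlessly, but it suggests the reduction is not being applied as intended).

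Second, and fatally, the configuration list you sketch is wrong and incomplete. A pair of adjacent $3$-vertices $u,v$ is \emph{not} $(\mathcal{F},4)$-boundary-reducible here: with $\mathcal{F}$ consisting of $C_4$ and close triangle pairs, the set $I=\{u,v\}$ is $\mathcal{F}$-forbidding whenever no triangle lies near the edge $uv$ (a new common neighbour creates only a single triangle), and then (FORB) demands colouring the edge $uv$ from lists of size $4-3+1-1=1$ at each endpoint, which can fail. The paper instead needs a path of \emph{three} consecutive $3$-vertices (its (RC2)), where the middle vertex retains three available colours, and --- crucially --- a configuration you never identify: a $d$-vertex adjacent to $d-2$ vertices of degree $3$ that is on or adjacent to a triangle (its (RC3)), whose reducibility uses the triangle-distance hypothesis to rule out all $\mathcal{F}$-forbidding sets of size $2$. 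Without (RC3) the discharging cannot close: it is exactly what bounds the number of ``pendent $3$-faces'' a vertex must feed and what forces the third neighbour of a $3$-vertex on a triangle to have degree at least $4$. Since you explicitly defer the discharging analysis (``the main obstacle'') and the configuration verification (``finite-but-tedious''), and since the configurations you do name would not survive that verification, the proposal as written does not contain a proof; it is an outline whose essential components remain to be discovered, not merely checked.
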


\begin{theo}\label{C4C5}
There exists $\varepsilon > 0$ such that every planar graph without $C_4, C_5, C_6$ is weighted $\varepsilon$-flexible for lists of size 4.
\end{theo}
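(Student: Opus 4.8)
The plan is to derive Theorem~\ref{C4C5} from the strengthened flexibility tool established earlier, exactly as Theorems~\ref{diamond} and~\ref{diamondtriangle} are: that tool reduces weighted $\varepsilon$-flexibility for lists of size $4$ to a purely structural property of the graph, so it suffices to verify that property for every planar graph $G$ without $C_4,C_5,C_6$. Informally, one wants to peel $G$ to the empty graph by repeatedly deleting a vertex (or a bounded reducible configuration) whose back-degree never exceeds $3$, so that a positive fraction of the deletions are of vertices with back-degree at most $2$ --- the ``flexible'' deletions, which carry a spare colour --- and these are spread so that every vertex lies near one. Since deleting vertices keeps us in the class (a subgraph of a planar $C_4,C_5,C_6$-free graph is again planar and $C_4,C_5,C_6$-free), this is naturally an induction on $|V(G)|$: the content is to show that a non-empty graph in the class always contains one of a prescribed finite list of reducible configurations, after which the tool assembles the required $L$-colourings with a uniform $\varepsilon>0$.

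I would dispose of the triangle-free case at once: then $G$ has girth at least $7$, hence $\mathrm{mad}(G)<3$, so $G$ is $2$-degenerate and Theorem~\ref{thm:degeneracy} already delivers weighted $\varepsilon$-flexibility for lists of size $4$. All the work is thus about triangles, and the first step is to record the rigidity they are forced into. In a plane embedding every face has length $3$ or at least $7$; no two triangles share an edge, since two triangles on a common edge span a copy of $K_4^-$, which contains a $C_4$; and every vertex lying off a triangle $T$ has at most one neighbour on $T$, since two such neighbours again create a $K_4^-$, hence a $C_4$. Consequently the triangles are edge-disjoint and almost vertex-disjoint, with a windmill-type configuration (several triangles through one common vertex) being essentially the only way triangles can bunch together.

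The core is a discharging computation. Assign charge $d(v)-4$ to each vertex $v$ and $d(f)-4$ to each face $f$, so that by Euler's formula the total charge is $-8$. The objects with negative initial charge are the vertices of degree $2$ or $3$ and the triangular faces; the reservoirs are the faces of length at least $7$ (surplus at least $3$) and the high-degree vertices. The rules move charge from the reservoirs to the deficient objects, and the exclusion of $C_4$, $C_5$ and --- crucially --- $C_6$ is what lets the accounts balance: a face of length at least $7$ has enough surplus to subsidise the deficient vertices on its boundary, and since triangular faces neither share edges nor cluster except at a common vertex, each draws on its own supply. Having first excluded, as reducible configurations, the low-degree vertices and short chains of $2$-vertices (and the remaining small pieces) that the tool allows us to delete while preserving the flexible-vertex budget, one verifies that after discharging every vertex and every face is non-negative; the contradiction $-8=\sum(\text{final charges})\ge 0$ then rules out a minimal graph with no reducible configuration and completes the induction.

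The step I expect to be the main obstacle is the discharging around triangles, and in particular around windmill centres --- vertices that simultaneously bound several triangular faces and may also be adjacent to deficient $2$-vertices. The rules must be calibrated so that such a vertex can feed all of its triangles and all of its poor neighbours at once, and --- this is the quantitative heart --- so that the resulting list of reducible configurations still guarantees that a $\Theta(|V(G)|)$ fraction of the deletions are flexible, which is exactly where forbidding $C_6$ (and not merely $C_4$ and $C_5$) is genuinely used. Once the reducible configurations are chosen to fit the tool's hypothesis and the rule weights are fixed, what remains is a finite case analysis over the local pictures around $2$-vertices, $3$-vertices, triangular faces and windmill centres, where I do not anticipate surprises.
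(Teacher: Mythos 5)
Your overall strategy---minimal counterexample, reducible configurations certified through the $(\mathcal{F},k)$-boundary-reducibility tool, then a discharging contradiction---is exactly the paper's, and your preliminary observations (triangles are edge-disjoint, a vertex off a triangle has at most one neighbour on it, the triangle-free case follows from $2$-degeneracy via Theorem~\ref{thm:degeneracy}) are all correct. But the proposal stops precisely where the proof begins. For $k=4$ the only vertices that are reducible on their own are $2^-$-vertices (Lemma~\ref{lem:singleVertex}), so $3$-vertices and triangular faces remain deficient and must be paid for, and no choice of rules can balance the charge unless you first forbid specific multi-vertex configurations. The paper needs three of them: a triangle with two $3$-vertices (an instance of Lemma~\ref{lem:twoVerticesOnTriangle}), a $4$-vertex lying on two $(3,4,4)$-faces, and two adjacent $4$-vertices on disjoint $(3,4,4)$-faces; each requires its own verification of (FIX) and (FORB), with (FORB) hinging on the fact that a new common neighbour of two configuration vertices would create a $C_4$ or $C_5$. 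Your proposal never names these (the phrase ``low-degree vertices and short chains of $2$-vertices and the remaining small pieces'' does not cover them --- there are no chains of $2$-vertices, since a single $2$-vertex is already reducible), and you explicitly defer the balancing around triangles as ``the main obstacle.'' That obstacle is the theorem. Note also that your attention to ``windmill centres'' points at the wrong structure: the genuinely tight case is two disjoint $(3,4,4)$-triangles joined by an edge between their $4$-vertices, which is not a windmill and is exactly what the third configuration above excludes.

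Two further points. First, your description of the resolution --- ``a positive fraction of the deletions are flexible \ldots spread so that every vertex lies near one'' --- is the condition for \emph{weak} flexibility; for \emph{weighted} flexibility, which is what Theorem~\ref{C4C5} claims, every vertex of every deleted piece must satisfy (FIX), so you cannot trade away any deletions. Second, your normalization $ch(v)=\deg(v)-4$, $ch(f)=|f|-4$ makes triangular faces and $3$-vertices simultaneously deficient with no local reservoir: a $3$-vertex on a triangle sees only two $7^+$-faces and collects at most $6/7<1$, so charge would have to be routed through second neighbourhoods. The paper instead uses $ch(v)=-2$, $ch(f)=|f|-2$, which makes $3$-faces carry surplus $+1$ that they redistribute to their incident $3$- and $4$-vertices; this is what makes the case analysis finite and local. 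As written, the proposal is a correct plan with the essential content --- the reducible configurations and the discharging rules --- missing.
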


In order to prove the Theorems~\ref{diamond}, \ref{diamondtriangle}, and \ref{C4C5}, we strengthened the main tool from~\cite{Dvorakmain}, which was explicitly presented in~\cite{req-triangle}.
This tool, Lemma~\ref{lem:evenImproved} in Section~\ref{s:mainTool}, allows us to construct more fine-tuned reducible configurations, thereby reducing the complexity of the discharging argument.

The concept of weak flexibility enables stronger reducible configurations when widespread requests are considered.  To identify such configurations, we adapted Lemma~\ref{lem:evenImproved} for widespread requests.
Using the notion of weak flexibility, we derive the following result.
For an example of a configuration that is possible in this setting, but not in that of general flexibility, see (RC3) in Section~\ref{sec:weak}.

As planar graphs without $C_4$ are not $3$-degenerate, the following result further extends our attempts to attack Question~\ref{q:general}.
In this case \Gdiamond is allowed and we instead forbid the ``house" \house (a $C_3$ and a $C_4$ sharing an edge) and $K_{2,3}$.

\begin{theo}\label{weak}
  There exists $\varepsilon > 0$ such that every planar graph with neither \house nor $K_{2,3}$ is weakly $\varepsilon$-flexible for lists of size 5.
\end{theo}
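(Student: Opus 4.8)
The plan is to follow the now-standard discharging framework for flexibility results, but using the widespread-request version of the main tool (the analogue of Lemma~\ref{lem:evenImproved} adapted to widespread requests, which we establish in Section~\ref{sec:weak}). Let $G$ be a hypothetical minimal counterexample: a planar graph with neither \house nor $K_{2,3}$ that is not weakly $\varepsilon$-flexible for lists of size $5$, with $\varepsilon$ to be fixed small at the end. Since the classical precoloring-extension tool only needs a \emph{bounded-size} set of vertices at which colorings are forbidden, the strategy is to exhibit, for suitable small constants, a collection of \textbf{reducible configurations} such that any plane graph with the stated forbidden subgraphs must contain at least one of them; then a weighting/discharging argument shows this collection is unavoidable, contradicting minimality.

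First I would record the structural consequences of forbidding \house and $K_{2,3}$ in a plane graph: no two triangles share an edge (that would create \house unless the other face is also a triangle, i.e.\ a $K_4^-$ plus..., so one must be careful — in fact forbidding \house means a triangle's edges lie on at most... ), every edge lies in at most one triangular face, and $4$-faces cannot have two vertices joined by a path of length $2$ through a common neighbor outside the face without creating $K_{2,3}$. These give bounds on how triangles and short faces can cluster, which is exactly what is needed to control the discharging. Then I would set up charges $\mathrm{ch}(v)=\deg(v)-4$ for vertices and $\mathrm{ch}(f)=\ell(f)-4$ for faces, so that by Euler's formula the total charge is $-8$. The discharging rules would move charge from large faces and high-degree vertices toward triangles and low-degree vertices, and the goal is that every vertex and face ends with nonnegative charge \emph{unless} a reducible configuration appears; since some object must remain negative, a reducible configuration is forced.

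The heart of the matter is designing the reducible configurations and verifying their reducibility using the widespread-request tool. For each candidate configuration $H$ (a small subgraph, typically a low-degree vertex together with its neighborhood and incident short faces), one deletes $H$ (or a core vertex set) from $G$, applies minimality to obtain an $L$-coloring of $G-H$ satisfying an $\varepsilon$-fraction of the widespread request restricted there, and then extends it across $H$. The tool guarantees that the extension can avoid a bounded number of "bad" color patterns on $H$ while still satisfying a positive fraction of the requests inside $H$; because each vertex has exactly one request (widespread), the analogue lemma yields a better constant than the general one, which is precisely what lets configurations like (RC3) — impossible in the general setting — go through here. Concretely, I would verify for each configuration that the number of vertices of $H$ times the list excess is enough for the greedy/tool-based extension, checking choosability of $H$ against all relevant precolorings of the boundary.

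The main obstacle I expect is exactly this last step: assembling a set of reducible configurations that is simultaneously (a) genuinely reducible — each one must survive the extension even in the worst case where all boundary vertices are precolored adversarially and all internal requests must be partially honored — and (b) unavoidable, i.e.\ discharging closes with no leftover negative charge. The absence of $K_4^-$ is \emph{not} assumed here (unlike Theorem~\ref{diamond}), so diamonds are present and triangles can interact in more ways; the extra hypothesis that rules out \house and $K_{2,3}$ must be leveraged heavily to keep the triangle-and-$4$-face structure tame enough for the charge bookkeeping to balance. I would expect to need a somewhat delicate case analysis on the "types" of small faces around low-degree vertices, and the strengthened widespread tool is what makes the otherwise-too-tight inequalities in the reducibility checks go through. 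Once every configuration is shown reducible and the discharging is verified to be exhaustive, choosing $\varepsilon$ smaller than the minimum constant extracted from the finitely many reducibility arguments completes the proof.
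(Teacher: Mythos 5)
Your high-level framework is the same as the paper's: take a minimal counterexample, exhibit a finite list of configurations that are (weakly) $(\mathcal F,5)$-boundary-reducible via the widespread-request part of Lemma~\ref{lem:evenImproved}, and show by discharging that one of them must occur. However, the proposal stops exactly where the proof begins. You never name a single reducible configuration, never verify (FIX) or (FORB) for any of them, and never state a discharging rule; you explicitly defer ``assembling a set of reducible configurations that is simultaneously reducible and unavoidable'' as the expected obstacle. That assembly \emph{is} the proof. For the record, the paper's list is: a $3^-$-vertex; a $4$-cycle with two adjacent $4$-vertices; a $4$-cycle with consecutive degrees $4,5,4$; a $4$-cycle with three $5$-vertices and one $4$-vertex (this last one is only \emph{weakly} reducible, since the $4$-vertex fails (FIX) --- this is the configuration that genuinely exploits widespread requests); and a path on four $4$-vertices. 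Each requires a short but nontrivial case check of (FORB) using the fact that attaching a new common neighbor to two vertices of the configuration creates a \house or a $K_{2,3}$. None of this can be waved through with ``the tool guarantees the extension.''

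Two further concrete issues. First, your structural analysis of the forbidden subgraphs is off: forbidding \house does \emph{not} prevent two triangles from sharing an edge ($K_4^-$ is explicitly allowed in this theorem); what it prevents is, e.g., three consecutive triangular faces around a vertex, which is what bounds the number of $3$-faces at a $4$-vertex by two and at a $5$-vertex by three --- the inequality the vertex-charge computation actually needs. Your sentence on this point trails off unresolved, and an incorrect structural lemma here would sink the discharging. Second, your proposed initial charge ($\deg(v)-4$ and $\ell(f)-4$) differs from the paper's ($\deg(v)-2$ for vertices, $-2$ for all faces); with your assignment, $4$-vertices carry zero charge yet may lie on two $3$-faces, so they have nothing to send, and it is not evident your bookkeeping closes. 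This is repairable in principle, but it is another place where the proposal asserts rather than verifies that ``the charge bookkeeping balances.'' As written, the proposal is a plan for a proof, not a proof.
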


Table~\ref{tab:flex-choose} compares known related results in terms of flexibility and choosability regarding planar graphs with forbidden structures. 
  The results proved in this paper are highlighted in green. 
 
  \begin{table}[!htbp]
\begin{center}
  \begin{tabular}{ | l|| c ||  c | c | c |}
    \hline
    Planar graphs &  & $C_3$ & $C_3$, $C_4$ & $C_3$, $C_4$, $C_5$ \\
    \hline\hline
    Degeneracy & 5\footnotemark[1] & 3\footnotemark[1]        & 3\footnotemark[1]  & 2   \\
    Choosability 
                              & 5 \cite{ch-general,voigt1993}      & 4~\cite{glebov,voigt1995}      &  3~\cite{ch-girth5}  & 3  \\
    Weak Flexibility
                              &\ha 6            & 4         &\ha 4  & 3 \\
    Flexibility   &\ha 6 \cite{Dvorakmain}           & 4         &\ha 4  & 3  \\
    Weighted Flexibility
                 &\ha 7
                 & 4~\cite{req-triangle}         &\ha 4  & 3~\cite{req-six}  \\
\hline
  \end{tabular}\\\smallskip
  \begin{tabular}{ | l|| c |  c | c | c || c | c | }
    \hline
    Planar graphs &  $C_4$  & $C_4$,$C_3$ dist. $\leq 1$ & $C_4$, $C_5$ & $C_4$, $C_5$, $C_6$ & $K_4^-$& \house, $K_{2,3}$\\
    \hline\hline
    Degeneracy &  4\footnotemark[1] & 3\footnotemark[1]$^{,}$\footnotemark[2]  & 3\footnotemark[1]~\cite{WangLih} & 3\footnotemark[1] & 4\footnotemark[1] &4\footnotemark[1] \\
    Choosability &   4~\cite{chsq-free} &\eah 4 & 4~\cite{V07}&\eah 4  & \eah 5 &\eah 5 \\
    Weak Flexibility
                 & \ha 5    &\eah 4  &\ha 5  &\eah 4 & \eah 5 & \eah 5 T\ref{weak} \\
    Flexibility   &\ha  5 &\eah 4  &\ha 5 &\eah 4 &\eah 5&\ha 6 \\
    Weighted Flexibility
                       &\ha  5~\cite{req-four}&\eah 4~T\ref{diamondtriangle}  &\ha 5 &\eah 4~T\ref{C4C5} &\eah 5~T\ref{diamond} &\ha 6\\
\hline
  \end{tabular}
\end{center}
\caption{\label{tab:flex-choose}
  Known results in terms of degeneracy, flexibility, and choosability, including known lower bounds for choosability.
  Each graph class is a subclass of planar graphs. The first row indicates the forbidden subgraphs for each column. 
  Results without a reference follow from some other entry in the table, including the degeneracy arguments implied by Theorem~\ref{thm:degeneracy}.
  Entries in blue are prior results that are not known to be tight, and entries in green are from this paper but are not known to be tight.  
}
\end{table}
\addtocounter{footnote}{1}
\footnotetext{Note that the dodecahedron is a planar graph without $C_3,C_4$ that is not 2-degenerate.
The  icosidodecahedron is a planar graph without $C_4$ that is not 3-degenerate.
 The truncated cube is a planar graph without $C_4,C_5,C_6,C_7$ that is not 2-degenerate.
The icosahedron is a planar graph that is not 4-degenerate.
 }
  \stepcounter{footnote}
  \footnotetext{We have not found a reference for this result and therefore include a proof for completeness as Observation~\ref{obs:degeneracy} in Section~\ref{sec:diamondtriangle}}

In Section~\ref{s:mainTool}, we develop the notation  and prove Lemma~\ref{lem:evenImproved}, our main tool. We lay out our proof strategy in Section~\ref{sec:lemmas} and the proofs of Theorems~\ref{diamond}, \ref{diamondtriangle}, \ref{C4C5}, and \ref{weak} are in Sections~\ref{sec:diamond}, \ref{sec:diamondtriangle}, \ref{sec:C4C5}, and~\ref{sec:weak}, respectively.

\newcommand{\E}{{\mathbb{E}}}
\renewcommand{\L}{\mathcal{L}}
\newcommand{\e}{{\mathbb{E}}}
\newcommand{\asum}[1]{\widetilde{\sum_{#1}}}
\newcommand{\diag}{{\rm diag}}
\newcommand{\pp}{\mathbf{p}}
\newcommand{\Var}{{\rm Var}}
\newcommand{\Cov}{{\rm Cov}}
\newcommand{\diam}{{\rm diam}}
\newcommand{\p}{\mathbb{P}}

\section{Main tool}\label{s:mainTool}

In this section, we strengthen and modify the main tool used in~\cite{req-triangle,Dvorakmain}, in order to better understand the structural properties of a hypothetical minimum counterexample.
Let $1_I$ denote the characteristic function of $I$, meaning that $1_I(v)=1$ if $v\in I$ and $1_I(v)=0$ otherwise.
For integer functions on the set of vertices of $H$, we define addition and subtraction in the natural way,
adding and subtracting, respectively, their values at each vertex. Given a function $f:V(H)\to\mathbb{Z}$ and a vertex $v\in V(H)$, let $f\downarrow v$ denote the function such that $(f\downarrow v)(w)=f(w)$ for $w\neq v$
and $(f\downarrow v)(v)=1$.  A list assignment $L$ is an \emph{$f$-assignment} if $|L(v)|\ge f(v)$ for all $v\in V(H)$.
  Given a set of graphs $\mathcal F$ and a graph $H$, a set $I\subset V(H)$ is \emph{$\mathcal{F}$-forbidding} if the graph $H$ together with one additional vertex adjacent to all of the vertices in $I$ does not contain any graph from $\mathcal{F}$. 
  Let  $H$ be an induced subgraph of $G$ and $v\in V(H)$. We define $\deg_H(v)$ as the degree of $v$ measured in the induced subgraph $H$.
  
\begin{defn}[$(\mathcal{F},k)$-boundary-reducibility]
  A graph $H$ is an \emph{$(\mathcal{F},k)$-boundary-reducible} induced subgraph of $G$ if there exists a set $B\subsetneq V(H)$ such that
\begin{itemize}
\item[(FIX)] for every $v\in V(H)\setminus B$, $H- B$ is $L$-colorable for every (($k-\deg_G+\deg_{(H-B)})\downarrow$ $v$)-assignment $L$, and 
  \item[(FORB)] for every $\mathcal{F}$-forbidding set $I \subseteq V(H)\setminus B$ of size at most $k-2$, $H- B$ is $L$-colorable for every ($k-\deg_G+\deg_{(H-B)}-1_I$)-assignment $L$. 
\end{itemize}
\end{defn}\label{reduce}
We define an additional reducibility condition wherein we weaken the (FIX) property in order to consider a new, weaker variation of flexibility.  

\begin{defn}[weak $(\mathcal{F},k)$-boundary-reducibility]
  A graph $H$ is a \emph{weakly $(\mathcal{F},k)$-boundary-reducible} induced subgraph of $G$ if there exists a set $B\subsetneq V(H)$ and a non-empty set $\text{Fix}(H)\subseteq \left(V(H)\setminus B\right)$ such that
\begin{itemize}
  \item[(FIX)] for every $v\in \text{Fix}(H)$, $H- B$ is $L$-colorable for every (($k-\deg_G+\deg_{(H-B)})\downarrow$ $v$)-assignment $L$, 
    and 
  \item[(FORB)] for every  $\mathcal{F}$-forbidding set $I \subseteq V(H)\setminus B$ of size at most $k-2$, $H- B$ is $L$-colorable for every ($k-\deg_G+\deg_{(H-B)}-1_I$)-assignment $L$. 
\end{itemize}
\end{defn}

In general, we may allow $\text{Fix}(H)$ to be only one vertex, which clearly establishes this notion as a weaker one than that given in Definition \ref{reduce}. In both of the preceding definitions, we will sometimes refer to the set $B$ as the \textit{boundary} of the configuration.

\begin{defn}[$(\mathcal{F},k,b)$-resolution]\label{def:resolution}
  Let $G$ be a graph with lists of size $k$ that does not contain any graph in $\mathcal{F}$ as an induced subgraph.
We define \emph{$(\mathcal{F},k,b)$-resolution} of $G$ as a set $G_i$ of nested subgraphs for $0\leq i\leq M$, such that $G_0\df G$ and
\[
  G_i\df G- \bigcup_{j=1}^{i} {\left(H_j- B_j\right)},
\]
where each $H_i$ is an induced  $(\mathcal{F},k)$-boundary-reducible subgraph of $G_{i-1}$ with boundary $B_i$ such that $|V(H_i)\setminus B_i|\le b$ and $G_{M}$ is an $(\mathcal{F},k)$-boundary-reducible graph with empty boundary and size at most $b$. 
For technical reasons, let $G_{M +1}\df\emptyset$.
\end{defn}

A \emph{weak $(\mathcal{F},k,b)$-resolution} is defined anaologously to an $(\mathcal{F},k,b)$-resolution; it uses weak $(\mathcal{F},k)$-boundary-reducibility in the place of $(\mathcal{F},k)$-boundary-reducibility.  

It is our goal to show that every graph that does not contain any graph from ${\mathcal F}$ as a subgraph contains a (weakly) reducible subgraph.  Conceptually, we then think of a (weak) resolution as an inductively-defined object obtained by iteratively identifying some (weakly) reducible subgraph $H$ with boundary $B$ and deleting $H-B$ until $V(G)$ is exhausted.   

Going forward, when considering weak reducibility, let $\text{Fix}(G)$ denote the union of each $\text{Fix}(H)$ over the weak $(\mathcal{F},k)$-boundary-reducible subgraphs $H$ in some resolution of $G$.  While $\text{Fix}(G)$ depends on the particular resolution under consideration, we will generally omit mention of the resolution when the context is clear. By definition, the following property holds:
\[\text{Fix}(G)\cap \left(G_i- G_{i+1}\right)\neq\emptyset.\]
When considering $(\mathcal{F},k)$-boundary-reducibility, we take $\text{Fix}(H)=V(H)\setminus B$ where $B$ is the boundary of $H$.  

To prove weighted $\varepsilon$-flexibility, we use the following observation made by Dvo\v{r}\'ak et al.~\cite{Dvorakmain}.

\begin{lemma}[\cite{Dvorakmain}]\label{lemma:distrib}
Let $G$ be a graph and let $L$ be a list assignment on $V(G)$.
Suppose $G$ is $L$-colorable and there exists a probability distribution on $L$-colorings $\varphi$ of $G$ such that for every $v\in V(G)$ and $c\in L(v)$, $$\mathbf{Prob}[\varphi(v)=c]\ge\varepsilon.$$
Then $G$ with $L$ is weighted $\varepsilon$-flexible.
\end{lemma}

In light of lemma~\ref{lemma:distrib}, we can derive a similar lemma for weak $\varepsilon$-flexibility, provided that the assumptions hold for sufficiently large and evenly distributed set of vertices.

\begin{lemma}[]\label{lemma:distribWeak}
  Let $b$ be an integer.
Let $G$ be a graph with a widespread request and let $L$ be a list assignment on $V(G)$.
Suppose $G$ is $L$-colorable with a weak $(\mathcal{F},k,b)$-resolution and there exists a probability distribution on $L$-colorings $\varphi$ of $G$ such that for every $v\in \text{Fix}(G)$ and $c\in L(v)$, $\mathbf{Prob}[\varphi(v)=c]\ge\varepsilon$.
Then $G$ with $L$ is weakly $\left(\varepsilon\cdot \frac{1}{b}\right)$-flexible.
\end{lemma}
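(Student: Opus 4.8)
The goal is to deduce weak $\bigl(\varepsilon\cdot\tfrac1b\bigr)$-flexibility from the hypothesis that some $L$-coloring distribution hits every color in every $\text{Fix}(G)$-list with probability at least $\varepsilon$. The plan is to reduce to Lemma~\ref{lemma:distrib} by localizing to a large, evenly spread subset of the vertices. First I would fix an arbitrary widespread request $r$ on $V(G)$; since $\text{dom}(r)=V(G)$, we have $|\text{dom}(r)|=|V(G)|=\sum_{i=0}^{M}|V(G_i)\setminus V(G_{i+1})|$. The key structural fact is that the weak $(\mathcal F,k,b)$-resolution partitions $V(G)$ into the blocks $G_i-G_{i+1}$, each of size at most $b$, and by definition $\text{Fix}(G)$ meets every such block: $\text{Fix}(G)\cap(G_i-G_{i+1})\neq\emptyset$. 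Hence the number of blocks is at least $|V(G)|/b$, and picking one $\text{Fix}$-vertex from each block gives $|\text{Fix}(G)|\ge |V(G)|/b = |\text{dom}(r)|/b$.

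Next I would run the probabilistic argument of Dvořák--Norin--Postle restricted to the requested colors on $\text{Fix}(G)$. Sample $\varphi$ from the given distribution. For each $v\in\text{Fix}(G)$, the event $\varphi(v)=r(v)$ has probability at least $\varepsilon$ because $r(v)\in L(v)$ and the hypothesis applies to every color in $L(v)$ for $v\in\text{Fix}(G)$. By linearity of expectation, the expected number of vertices $v\in\text{Fix}(G)$ with $\varphi(v)=r(v)$ is at least $\varepsilon\,|\text{Fix}(G)|\ge \varepsilon\cdot\tfrac1b\,|\text{dom}(r)|$. Therefore there exists at least one $L$-coloring $\varphi$ in the support of the distribution for which the number of satisfied requests among $\text{Fix}(G)$, and hence among all of $\text{dom}(r)$, is at least $\varepsilon\cdot\tfrac1b\,|\text{dom}(r)|$. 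That is precisely the definition of $\bigl(\varepsilon\cdot\tfrac1b\bigr)$-satisfiability of $r$, and since $r$ was arbitrary, $G$ with $L$ is weakly $\bigl(\varepsilon\cdot\tfrac1b\bigr)$-flexible. Note $L$-colorability of $G$ is guaranteed by hypothesis (the distribution is supported on $L$-colorings), so the statement is non-vacuous.

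The argument is essentially a bookkeeping adaptation of Lemma~\ref{lemma:distrib}: the only genuinely new ingredient is the observation that the resolution forces $\text{Fix}(G)$ to be a $\tfrac1b$-fraction-dense hitting set of $V(G)$, which is exactly the loss reflected in the factor $\tfrac1b$. I expect the main (though still modest) subtlety to be making the counting of $|\text{Fix}(G)|$ airtight: one must be careful that the $G_i-G_{i+1}$ really do partition $V(G)$ (using $G_0=G$, $G_{M+1}=\emptyset$, and the nesting), that each contributes at most $b$ vertices, and that choosing one $\text{Fix}$-representative per block does not overcount across blocks (it cannot, since the blocks are disjoint). Once that is set up, the probabilistic step and the conclusion follow verbatim from the proof of Lemma~\ref{lemma:distrib}.
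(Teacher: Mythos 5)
Your proposal is correct and follows essentially the same route as the paper's proof: bound $|\text{Fix}(G)|\ge |V(G)|/b$ using the fact that the resolution partitions $V(G)$ into blocks of size at most $b$ each meeting $\text{Fix}(G)$, then apply linearity of expectation to the given distribution and conclude that some coloring satisfies at least $\varepsilon\cdot\frac{1}{b}|\brm{dom}(r)|$ requests. Your added care in justifying the counting of $|\text{Fix}(G)|$ is a detail the paper states only as ``by assumption,'' but it is the same argument.
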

The proof is very similar to proof of Lemma~\ref{lemma:distrib}, but it makes use of the fact that requests are made for all vertices of $G$.
\begin{proof}
  Let $r$ be a widespread request for $G$ and $L$.
  Let $\phi$ be  chosen at random based on the given probability distribution.
  By assumption $|\text{{Fix}(G)}|\ge \frac{|V|}b$.
By linearity of expectation:
\[\mathbf{E}\Biggl[\sum_{v\in \text{Fix}(G),\phi(v)=r(v)} 1 \Biggr]=\sum_{v\in \text{Fix}(G)} \mathbf{Prob}[\phi(v)=r(v)]\ge\varepsilon \cdot\frac{|V|}{b},
\]
and thus there exists an $L$-coloring $\phi$ with $\left(\varepsilon\cdot \frac{|V|}{b}\right)$ satisfied requests.
\end{proof}

Now, we are ready to strengthen the key lemma implicitly presented by Dvo\v{r}\'ak, Norin, and Postle in~\cite{Dvorakmain}, and explicitly formulated as Lemma~4~in~\cite{req-triangle}).

\begin{lemma}\label{lem:evenImproved}
  For all integers $k\ge 3$ and $b\ge 1$ and for all sets $\mathcal{F}$ of forbidden subgraphs there exists an $\varepsilon>0$ as follows. 
Let $G$ be a graph with an $(\mathcal{F},k,b)$-resolution. 
 Then $G$ with any assignment of lists of size $k$ is weighted $\varepsilon$-flexible.
 Furthermore, if the request is widespread and $G$ has weak $(\mathcal{F},k,b)$-resolution, 
 then $G$ with any assignment of lists of size $k$ is weakly $\left(\varepsilon\cdot \frac{1}{b}\right)$-flexible.
\end{lemma}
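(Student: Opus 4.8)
The plan is to invoke Lemma~\ref{lemma:distrib} and Lemma~\ref{lemma:distribWeak}, following the scheme of~\cite{Dvorakmain}: both conclusions reduce to producing, for any assignment $L$ of lists of size $k$ on $G$, a probability distribution on $L$-colourings $\varphi$ of $G$ with $\mathbf{Prob}[\varphi(v)=c]\ge\varepsilon$ for every $v\in\text{Fix}(G)$ and every $c\in L(v)$. Once this is done, $G$ is in particular $L$-colourable; for an $(\mathcal F,k)$-boundary-reducible resolution $\text{Fix}(G)=V(G)$, so Lemma~\ref{lemma:distrib} yields weighted $\varepsilon$-flexibility, whereas for a weak resolution each piece $H_i-B_i$ has at most $b$ vertices and meets $\text{Fix}(H_i)$, so $|\text{Fix}(G)|\ge|V(G)|/b$ and Lemma~\ref{lemma:distribWeak} yields weak $(\varepsilon/b)$-flexibility.

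First I would build the distribution by colouring the resolution from the inside out. Colour $G_M$ by a uniformly random proper colouring from $L$ (one exists by (FORB) applied to $G_M$ with $I=\emptyset$), and then, for $i=M,M-1,\dots,1$, extend a colouring $\psi$ of $G_i$ to $G_{i-1}$ by colouring the set $R_i\df V(H_i)\setminus B_i$, which has at most $b$ vertices: let $L'_\psi(w)$ be $L(w)$ with the colours of the already-coloured neighbours of $w$ removed (all such neighbours lie in $V(G_i)$), and pick a uniformly random proper $L'_\psi$-colouring of $H_i-B_i$. A routine count gives $|L'_\psi(w)|\ge k-\deg_G(w)+\deg_{H_i-B_i}(w)$ for $w\in R_i$, so (FORB) with $I=\emptyset$ guarantees that such a colouring exists, and, since $|R_i|\le b$ and all lists have size at most $k$, there are at most $k^{b}$ of them. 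Iterating over $i$ produces an $L$-colouring of $G$.

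Next I would estimate $\mathbf{Prob}[\varphi(v)=c]$ for a fixed $v\in\text{Fix}(G)$, say $v\in\text{Fix}(H_i)$, and a fixed $c\in L(v)$, conditioning on the colouring $\psi$ of $G_i$ in force when step $i$ is reached. If $c\in L'_\psi(v)$, then applying (FIX) with $v$'s list shrunk to $\{c\}$ shows $H_i-B_i$ has a proper $L'_\psi$-colouring giving $v$ the colour $c$, so $\mathbf{Prob}[\varphi(v)=c\mid\psi]\ge k^{-b}$. It remains to bound below $\mathbf{Prob}[c\in L'_\psi(v)]$, i.e.\ the probability that no neighbour of $v$ coloured before $v$ receives $c$. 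These neighbours all lie in $V(G_i)$; since $v$'s residual list is nonempty there are at most $k-1$ of them, and — as one checks from the non-vacuity of (FORB) at the relevant vertices — no single earlier piece $H_j$ holds more than $k-2$ of them. For such a piece, write $J_j$ for the set of neighbours of $v$ in $R_j$; then $J_j$ is $\mathcal F$-forbidding, essentially because $G$ is $\mathcal F$-free and contains the configuration $H_j$ together with the extra vertex $v$ (which lies outside $V(H_j)$ and is adjacent to all of $J_j$). Hence (FORB), applied to $H_j$ with $I=J_j$, shows that the assignment obtained from $L'$ by deleting $c$ from the list of every vertex of $J_j$ is colourable; this exhibits, among the at most $k^{b}$ proper colourings of $H_j-B_j$, one in which no vertex of $J_j$ gets $c$, so conditioned on the colouring of $G_j$ the random colouring avoids $c$ on $J_j$ with probability at least $k^{-b}$. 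Chaining these bounds over the at most $k-1$ earlier pieces meeting $N(v)$, in the order in which the pieces are coloured, gives $\mathbf{Prob}[c\in L'_\psi(v)]\ge k^{-b(k-1)}$, and therefore $\mathbf{Prob}[\varphi(v)=c]\ge k^{-bk}$; so one may take $\varepsilon\df k^{-bk}$.

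The hard part will be this last estimate. Making the conditioning rigorous means combining the events ``no vertex of $J_j$ gets $c$'' over the distinct earlier pieces through the chain rule, relative to the order in which pieces are coloured; and, more delicately, one must check that (FORB) really does apply at every earlier piece — that each $J_j$ is $\mathcal F$-forbidding and has size at most $k-2$, the latter bound being exactly what the non-vacuity of (FORB) provides. Everything else — keeping track of residual list sizes, the crude $k^{b}$ counting bound, and assembling the probabilities — should be routine.
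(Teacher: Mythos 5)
Your proposal is correct and follows essentially the same route as the paper: the same inside-out random colouring of the resolution, the same $k^{-b}$ counting bound via (FIX) and (FORB), and the same identification of the neighbours outside the current piece as an $\mathcal F$-forbidding set of size at most $k-2$. The only difference is presentational: the conditioning/chain-rule step you flag as "the hard part" is exactly what the paper packages as an inductive invariant (its part (ii): for every $\mathcal F$-forbidding set $I$ with $|I|\le k-2$ and every colour $c$, the colouring avoids $c$ on all of $I$ with probability at least $k^{-b|I|}$), which is proved by splitting $I$ across the current piece and the rest at each level of the induction.
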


Even though the statement of the lemma is noticeably stronger, its proof remains quite similar to the original formulation. 
We include the proof for the sake of completeness.

 \begin{proof}
Let $p=k^{-b}$ and $\varepsilon=p^{k-1}$.
For a graph $G$ satisfying the assumptions and an assignment $L$ of lists of size $k$,
we prove the following claim by induction on the $(\mathcal{F},k,b)$-resolution:

\smallskip

There exists a probability distribution on $L$-colorings $\varphi$ of $G$ such that
\begin{itemize}
  \item[(i)] for every $v\in \text{Fix}(G)$ and a color $c\in L(v)$, the probability that $\varphi(v)=c$ is at least $\varepsilon$, and
\item[(ii)] for every color $c$ and every $\mathcal{F}$-forbidding set $I$ in $G$ of size at most $k-2$, the probability that $\varphi(v)\neq c$ for all $v\in I$ is at least $p^{|I|}$.
\end{itemize}
Part (i) with $\text{Fix}(G)=V(G)$ implies that $G$ with $L$ is weighted $\varepsilon$-flexible by Lemma~\ref{lemma:distrib}.
Part (i) with assumed widespread request implies that $G$ with $L$ is weakly $\left(\varepsilon\cdot \frac{1}{b}\right)$-flexible by Lemma~\ref{lemma:distribWeak}.\\\smallskip

The claim clearly holds for a graph with no vertices, the base case of the induction.
Hence, suppose $V(G_i)\neq\emptyset$.
By the assumptions, there exists a subgraph $H$ of $G$ such that $H$ is $(\mathcal{F},k)$-boundary-reducible.
By definition, there exists a boundary set $B\subset V(H)$, and let  $Q\df H - B$.
Moreover, by assumption, we know that the order of $Q$ is at most $b$.
By the induction hypothesis, there exists a probability distribution
on $L$-colorings of $G_{i+1}$ satisfying (i) and (ii).
Choose an $L$-coloring $\psi$ from this distribution and let $L'$ be the list assignment on $V(G[Q])$ defined by $L'(y)=L(y)\setminus \{\psi(v):v\in V(G)\setminus V(Q), vy\in E(G)\}$.
Note that $|L'(y)|\ge k-\deg_G(y)+ \deg_{G[Q]}(y)$
for all $y\in V(H)$, and thus $G[V(Q)]$ has an $L'$-coloring by (FORB) applied with $I=\emptyset$. Among all $L'$-colorings of $G[V(Q)]$, choose one
uniformly at random, extending $\psi$ to an $L$-coloring $\varphi$ of $G$.

Let us first argue  (ii) holds.  
Let $I_1=I\setminus V(Q)$  and $I_2=I\cap V(Q)$.  
By the induction hypothesis,
we have $\varphi(v)\neq c$ for all $v\in I_1$ with probability at least $p^{|I_1|}$.  
If $I_2=\emptyset$, then this implies (ii).
Hence, suppose $|I_2|\ge 1$.  
For $y\in I_2$, let $L_c(y)=L'(y)\setminus\{c\}$;
and for $y\in V(Q)\setminus I_2$, let $L_c(y)=L'(y)$.  
Note that $|L_c(y)|\ge k-\deg_G(y)+ \deg_{G[V(Q)]}(y)-1_I(y)$ for all $y\in V(Q)$. 
Thus by (FORB), $G[V(Q)]$ has an $L_c$-coloring.  
Since $G[V(Q)]$ has at most $k^b$ $L'$-colorings, we conclude that
the probability that $\varphi(y)\neq c$ for all $y\in I_2$ is at least $1/k^b=p\ge p^{|I_2|}$.
Hence, the probability that $\varphi(y)\neq c$ for all $y\in I$ is at least $p^{|I_1|+|I_2|}\ge p^{|I|}$, implying (ii).

Next, let us argue  (i) holds.  
For $v\in V(G)\setminus V(Q)$, this is true by the induction hypothesis.
Hence, suppose that $v\in V(Q)$, and let $I$ be the set of neighbors of $v$ in $V(G)\setminus V(Q)$.
Since $G$ does not contain any graph from $\mathcal{F}$ but does contain the boundary $B$, and all vertices in $I$ have a common neighbor, the set $I$ is $\mathcal{F}$-forbidden in $G-V(Q)$.
Furthermore, (FORB) implies $1\le k-\deg_G(v)+ \deg_{G[V(Q)]}(v)-1_{\{v\}}(v)=k-|I|-1$, and thus $|I|\le k-2$.
Hence, by the induction hypothesis, we have $\psi(u)\neq c$ for all $u\in I$ with probability at least $p^{k-2}$.  Assuming this is the case, (FIX) implies there exists an $L'$-coloring of $G[V(Q)]$ that  gives $v$ the color $c$.
Since $G[V(Q)]$ has at most $k^b$ $L'$-colorings, we conclude that the probability that $\varphi(v)=c$ is at least $p^{k-2}/k^b=\varepsilon$.
Hence, (i) holds.
 \end{proof}

\section{Common proof preliminaries}\label{sec:lemmas}

We gather common definitions and reducible configurations for the forthcoming proofs. 
We also give an overview of the discharging method. 

A {\it $d$-vertex}, a {\it $d^+$-vertex}, and a {\it $d^-$-vertex} are a vertex of degree $d$, at least $d$, and at most $d$, respectively. 
A {\it $d$-face}, a {\it $d^+$-face}, and a {\it $d^-$-face} are defined analogously. 
A {\it $(d_1, d_2, d_3)$-face} is a $3$-face where the degrees of the vertices on the face is $d_1, d_2, d_3$.  
Throughout all the figures in the paper, black vertices have all their incident edges drawn, whereas a white vertex may have more edges incident than drawn since white vertices are in the boundary. 

Going forward, we will use the number of colors ``available" to a vertex $v$ in a configuration $H$ as the maximum number of colors remaining in $L(v)$ after coloring vertices exterior to the configuration (more precisely outside $H- B$).  
When considering (FIX) we reduce the number of available colors on a ``fixed" vertex to 1, and when considering (FORB), we reduce the number of available colors on the vertices of a candidate set $I$ by 1.

  \begin{lemma}\label{lem:singleVertex}
  A $(k-2)^-$-vertex is $(\emptyset,k)$-boundary-reducible with empty boundary.
 \end{lemma}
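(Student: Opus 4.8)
The plan is to take the configuration $H$ to be the single vertex itself, with empty boundary, and simply unwind the definition of $(\emptyset,k)$-boundary-reducibility. Concretely, let $v$ be a vertex of $G$ with $\deg_G(v)\le k-2$, set $H\df G[\{v\}]$ and $B\df\emptyset$; note $B\subsetneq V(H)$ as required. Then $H-B=H$ is a single isolated vertex, so in particular $\deg_{H-B}(v)=0$ and the only condition for a list assignment $L$ to admit an $L$-coloring of $H-B$ is that $L(v)\ne\emptyset$. The whole proof then reduces to checking that the prescribed lower bounds on $|L(v)|$ in (FIX) and (FORB) are never smaller than $1$, which is exactly where the hypothesis $\deg_G(v)\le k-2$ is used.

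For (FIX): the function $k-\deg_G+\deg_{H-B}$ takes the value $k-\deg_G(v)\ge k-(k-2)=2$ at $v$, and applying $\downarrow v$ replaces this value by $1$; hence any $\bigl((k-\deg_G+\deg_{H-B})\downarrow v\bigr)$-assignment has $|L(v)|\ge 1$, so $H-B$ is $L$-colorable. For (FORB): since $\mathcal{F}=\emptyset$, every subset of $V(H)\setminus B=\{v\}$ is vacuously $\mathcal{F}$-forbidding, so we only need to treat $I=\emptyset$ and (when $k\ge3$) $I=\{v\}$. In either case the value of $k-\deg_G+\deg_{H-B}-1_I$ at $v$ is at least $k-\deg_G(v)-1\ge 2-1=1$, so again $|L(v)|\ge1$ and $H-B$ is $L$-colorable. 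This verifies both conditions and establishes the lemma.

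\textbf{Main obstacle.} There is essentially no obstacle here: this is a warm-up/base-case configuration, and the only thing to be careful about is the bookkeeping of the degree function together with the $\downarrow v$ and $-1_I$ modifications — one must confirm that neither operation can push the required list size below $1$, which is precisely guaranteed by the slack of $2$ coming from $\deg_G(v)\le k-2$.
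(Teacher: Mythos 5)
Your proposal is correct and follows the same route as the paper: take $H$ to be the single vertex with empty boundary and observe that the slack $k-\deg_G(v)\ge 2$ keeps the required list size at least $1$ under both the $\downarrow v$ and $-1_I$ modifications. The paper's own proof is just a more terse version of this same verification.
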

  \begin{proof} 
 Let $H$ be the graph induced by a $(k-2)^-$-vertex $v$ and set the boundary to be empty. 

 (FIX) holds since $H$ has only one vertex $v$. 

 (FORB) also holds since there is still one available color for $v$. 
 \end{proof}
 
   \begin{lemma}\label{lem:twoVerticesOnTriangle}
     For $k\ge 4$, a $3$-cycle $uvw$ with two $(k-1)$-vertices $u,v$ is $(\{K_4^-\}),k)$-boundary-reducible with boundary $\{w\}$.
 \end{lemma}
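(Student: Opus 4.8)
The plan is to take $H$ to be the graph induced by the $3$-cycle $uvw$ with boundary $B=\{w\}$, so that $Q=H-B$ is the single edge $uv$. Since $u$ and $v$ are both $(k-1)$-vertices in $G$, their exterior neighborhoods are small; I will track the number of colors remaining on each after the exterior is colored. The degrees inside $Q$ are $\deg_{G[Q]}(u)=\deg_{G[Q]}(v)=1$, so the relevant list-size function on $Q$ is $k-\deg_G+\deg_{G[Q]}$, which on each of $u,v$ evaluates to $k-(k-1)+1=2$. Thus for the base list assignment we must $L$-color the edge $uv$ with $|L(u)|,|L(v)|\ge 2$, which is trivially possible. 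The content of the lemma lies in verifying (FIX) and (FORB) under the various deficits.

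For (FIX), given $x\in\{u,v\}$, say $x=u$, we have a list assignment $L$ with $|L(u)|\ge 1$ and $|L(v)|\ge 2$; we must color the edge $uv$. Color $u$ with its (unique available) color, then $v$ from its at least two colors, avoiding the single color on $u$: at least one choice remains. The symmetric argument handles $x=v$. For (FORB), let $I\subseteq\{u,v\}$ be $\{K_4^-\}$-forbidding of size at most $k-2$; since $|I|\le 2$ this is no real constraint, but the key point is which subsets $I$ can actually arise. If $I=\{u,v\}$, then adding a new vertex $z$ adjacent to both $u$ and $v$ creates, together with $w$ (which is adjacent to both $u$ and $v$ since $uvw$ is a triangle), a copy of $K_4^-$ on $\{u,v,w,z\}$ — so $I=\{u,v\}$ is \emph{not} $\{K_4^-\}$-forbidding and need not be considered. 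Hence $|I|\le 1$, and then $L$ satisfies $|L(u)|\ge 2-1_I(u)\ge 1$ and $|L(v)|\ge 2-1_I(v)\ge 1$ with at most one of them reduced to $1$; coloring the deficient vertex first and then the other from its two colors avoiding one forbidden color again succeeds.

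The main obstacle — really the only subtlety — is the observation that $I=\{u,v\}$ is excluded by the $K_4^-$-forbidding condition via the boundary vertex $w$; this is exactly the mechanism the definition of $\mathcal F$-forbidding (which allows one extra vertex adjacent to all of $I$ and asks that no member of $\mathcal F$ appear in the \emph{resulting} graph, with $H$ including $B$) is designed to exploit. Once that is noted, every remaining case reduces to greedily coloring two adjacent vertices each of which has at least two available colors, except possibly one with exactly one, which is elementary. Note also that $B=\{w\}\subsetneq V(H)$ as required, so $H$ genuinely is a $(\{K_4^-\},k)$-boundary-reducible induced subgraph; I will close by remarking that in a minimum counterexample (with a resolution) no such configuration can occur, which is how the lemma will later be used in the discharging argument of Section~\ref{sec:diamond}.
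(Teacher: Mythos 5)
Your proposal is correct and follows essentially the same route as the paper's proof: both verify (FIX) by greedily coloring the edge $uv$ with list sizes $1$ and $2$, and both dispose of (FORB) by observing that $I=\{u,v\}$ cannot be $\{K_4^-\}$-forbidding because a new common neighbor together with the boundary vertex $w$ would form a $K_4^-$, leaving only the $|I|\le 1$ case already covered by (FIX). The extra bookkeeping you include (computing $k-\deg_G+\deg_{G[Q]}=2$ explicitly) is just a more verbose rendering of the paper's "each of $u,v$ has two available colors."
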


 \begin{proof}
   Let $H$ be a $3$-cycle $uvw$ such that $\deg(u)=\deg(v)=k-1$ and set the boundary $B=\{w\}$.

 (FIX): Each of $u, v$ has two available colors. 
 Fix a color $\varphi(u)$ and choose an available color for $\varphi(v)$ that is not $\varphi(u)$ to extend the coloring. 
 Fixing a color for $v$ is symmetric. 
 
 (FORB): 
 Let $I\subset V(H)\setminus B$ where $|I|\leq k-2$. 
 If $|I|\geq 2$, then $I$ is not $\{K^-_4\}$-forbidding since connecting a new vertex to both $u$ and $v$ always creates a $K^-_4$. 
 It remains to consider the cases where $|I|=1$ but that is implied by (FIX).
\end{proof}

We use the discharging method for the proofs of our theorems. 
We end this section with a brief overview of the method, for more detailed introduction to discharging method for graph coloring see~\cite{Cranston}.
Given a theorem we aim to prove, let $H$ be a counterexample with the minimum number of vertices.
Fix a plane embedding of $H$ and let $F(H)$ denote the set of faces of $H$. 
To each $z\in V(H)\cup F(H)$, assign an {\it initial charge} $ch(z)$ so that the total sum is negative.
In our proofs, this part is a standard and straightforward aplication of Euler's formula. 
We then redistribute the charge according to some {\it discharging rules}, which will preserve the total charge sum. 
Let $ch^*(z)$ denote the {\it final charge} at each $z\in V(H)\cup F(H)$. 
We recount the charge at this point and show that the final charge is non-negative for each vertex and face to conclude that the sum of the final charge is non-negative. 
This is a contradiction since the initial charge sum is negative and the discharging rules preserve the total charge sum.
We conclude that a counterexample could not have existed. 

\section{Proof of Theorem \ref{diamond}}\label{sec:diamond}

In this section we prove Theorem~\ref{diamond}. Let  $\mathcal F=\{\Gdiamond\}$ and let $G$ be a counterexample to Theorem~\ref{diamond} with the minimum number of vertices. 
Fix a plane embedding of $G$ and note that by minimality, $G$ must be connected.
Let $L$ be a list assignment on $V(G)$ where each vertex receives at least five colors. 
The following configurations cannot appear in $G$ by Lemmas \ref{lem:singleVertex} and \ref{lem:twoVerticesOnTriangle}:
 \begin{itemize}
     \item [(RC1)] A $3^-$-vertex. 
     \item [(RC2)] A $3$-cycle incident with two $4$-vertices. 
 \end{itemize}
 
 \begin{figure}[H]
  \centering
  \includegraphics[height=3cm]{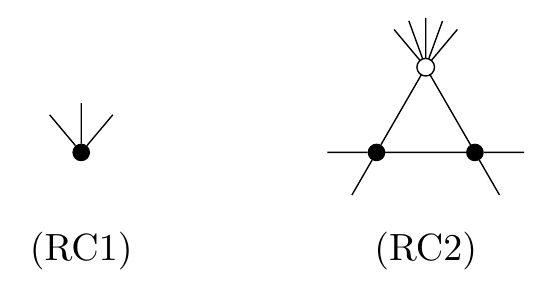}
  \caption{Reducible configurations}
  \label{fig:1}
\end{figure}

For each vertex $v$ and each face $f$, let $ch(v)=\deg(v)-4$ and $ch(f)=|f|-4$.
By Euler's formula the sum of initial charge is negative: $\sum_{v\in V(G)}(\deg(v)-4)+\sum_{f\in F(G)}(|f|-4)=2|E(G)|-4|V(G)|+2|E(G)|-4|F(G)|=-8$.
The following is the only discharging rule: 

\begin{itemize}
    \item [(D1)] Every $5^+$-vertex sends $1/2$ to each incident $3$-face.
\end{itemize}

We now check that each vertex and each face has non-negative final charge. Considering the vertices first, by (RC1) there is no $3^-$-vertex.  If $v$ is a $4$-vertex, then $ch^*(v)=ch(v)=\deg(v)-4=0$, since $v$ is not involved in the discharging rules. If $v$ is a $5^+$-vertex, then $ch^*(v)\geq \deg(v)-4-\lfloor\frac{\deg(v)}{2}\rfloor\cdot\frac{1}{2}\geq0$ by (D1).
Note that a vertex $v$ is incident with at most $\lfloor\frac{\deg(v)}{2}\rfloor$ $3$-faces since there is no \Gdiamond. 
Hence, each vertex has non-negative final charge. 

We next turn our attention to faces, again with the goal of showing that their final charge is non-negative.  If $f$ is a $4^+$-face, then $ch^*(v)=ch(v)=|f|-4\geq0$, since $f$ is not involved in the discharging rules. If $f$ is a $3$-face, then by (RC1) and (RC2), $f$ is incident with at least two $5^+$-vertices.  By (D1), $ch^*(f)\geq-1+2\cdot\frac{1}{2}=0$, completing the proof of Theorem \ref{diamond}.

\section{Proof of Theorem \ref{diamondtriangle}}\label{sec:diamondtriangle}

In this section, we prove Theorem~\ref{diamondtriangle}. 
Let $\mathcal F$ consist of $C_4$ and all possible configurations such that $C_3$'s are in distance at most $1$ and let $G$ be a counterexample to Theorem~\ref{diamondtriangle} with the minimum number of vertices.
Fix a plane embedding of $G$ and note that by minimality, $G$ must be connected. 
Let $L$ be a list assignment on $V(G)$ where each vertex receives at least four colors. 
The following configurations cannot appear in $G$:
\begin{itemize}
    \item [(RC1)] A $2^-$-vertex.
    \item [(RC2)] A $3$-vertex adjacent to two $3$-vertices. 
    \item [(RC3)] A $d$-vertex $v$ adjacent to $d-2$ vertices of degree 3 where $v$ is either on a $3$-cycle or is adjacent to a vertex on a $3$-cycle. 
\end{itemize}

\begin{figure}[H]
  \centering
    \includegraphics[height=3cm]{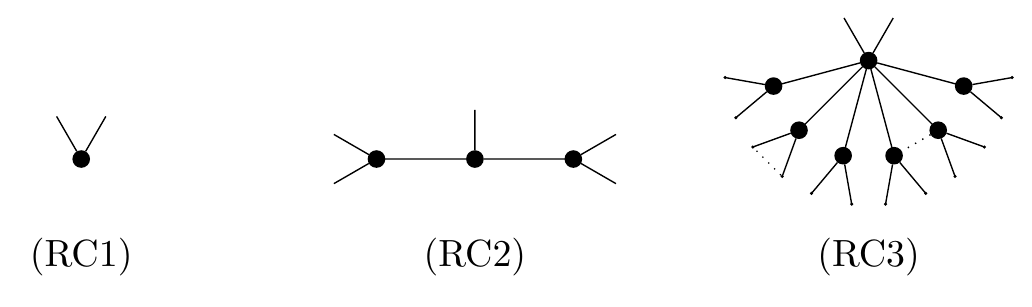}
  \caption{Reducible configurations: dotted lines indicate that there is a $3$-cycle somewhere}
\end{figure}

First, observe that (RC1) is reducible by Lemma~\ref{lem:singleVertex}.

\begin{lemma}
  \normalfont (RC2) \textit{ is $(\emptyset,4)$-boundary-reducible with empty boundary}.
 \end{lemma}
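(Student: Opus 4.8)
The plan is to verify the two conditions (FIX) and (FORB) of $(\emptyset,4)$-boundary-reducibility directly, with the empty boundary $B=\emptyset$, so that $H-B=H$ is the whole configuration (RC2): a $3$-vertex $u$ adjacent to two $3$-vertices $v,w$. Since $B=\emptyset$, the relevant list-size function is $k-\deg_G+\deg_H = 4-\deg_G+\deg_H$, and the number of ``available'' colors at each vertex equals its degree inside $H$. I would first draw $H$ explicitly (a path $v-u-w$, possibly with additional edges among $\{u,v,w\}$, though (RC2) as stated is just the star/path on these three vertices): $u$ has degree $2$ in $H$, so $2$ available colors; $v$ and $w$ each have degree $1$ in $H$, so $1$ available color each — wait, I need to be careful. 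Actually $\deg_H(u)=2$ gives $u$ four minus its external degree; since the whole neighborhood configuration matters, the cleanest phrasing is: $u$ has at least $4-\deg_G(u)+\deg_H(u) = 4-\deg_G(u)+2$ colors available, and since $\deg_G(u)\ge 3$ by (RC1) is not quite what I want — rather, I should just say each of $v,w$ has at least $4-3+1=2$ available colors (they are $3$-vertices so $\deg_G=3$), and $u$ has at least $4-\deg_G(u)+2$; if $\deg_G(u)=3$ this is $3$, and in general $u$ has at least $2$ available colors if $\deg_G(u)\le 4$, but $u$ could have large degree — however (RC2) pins $u$ to being a $3$-vertex, so $u$ has $3$ available colors, $v,w$ each have $2$.

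For (FIX): given any vertex of $H$ whose available-color count is dropped to $1$, color $H$ greedily. If we fix $v$ (one color), then color $w$ (two available, at least one avoids nothing yet), then color $u$ (three available, must avoid the two colors of $v,w$, still one left) — fine. By symmetry fixing $w$ works. If we fix $u$ to one color, then $v$ has two available, avoid $u$'s color, one left; likewise $w$; done. So (FIX) holds regardless of which vertex is fixed.

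For (FORB): since $\mathcal F=\emptyset$, every set $I\subseteq V(H)$ is $\mathcal F$-forbidding (adding a universal vertex to $I$ can never create a forbidden subgraph), so we must color $H$ after decreasing the available count by $1$ at every vertex of an arbitrary $I$ with $|I|\le k-2 = 2$. The worst case is $|I|=2$. I would check the two essentially different choices: $I=\{v,w\}$ leaves $v,w$ with $1$ color each and $u$ with $3$; color $v$, color $w$, then $u$ has $3$ available minus at most $2$ used, one remains. $I=\{u,v\}$ (or $\{u,w\}$) leaves $u$ with $2$, $v$ with $1$, $w$ with $2$; color $v$, color $u$ (avoid $v$'s color, one of its two remains), color $w$ (avoid $u$'s color, one of its two remains). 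And $|I|=1$ is implied by (FIX). So (FORB) holds.

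The only mild subtlety — and the step I would be most careful about — is bookkeeping the available-color counts correctly from the formula $4-\deg_G+\deg_H$ with the empty boundary, making sure the $3$-vertices $v,w$ really contribute two available colors and $u$ contributes three (all consistent with $\deg_G(v)=\deg_G(w)=\deg_G(u)=3$ in the configuration), and confirming that the greedy orderings above never get stuck; once the counts are set the colorings are immediate. I would present it as: \emph{Let $H$ be the graph induced by the $3$-vertices $u,v,w$ with $uv,uw\in E(H)$, and set $B=\emptyset$. Then $u$ has $3$ available colors and each of $v,w$ has $2$. (FIX): fixing any one vertex, color the remaining two greedily in the order ``degree-$1$-in-$H$ vertices first, then $u$''. (FORB): for $|I|\le 2$, color $H$ greedily coloring the vertices of smallest available-count first; since $\sum$ of deficits is at most $2$ and $u$ starts with $3$, no vertex is ever exhausted.} Hence (RC2) is $(\emptyset,4)$-boundary-reducible with empty boundary.
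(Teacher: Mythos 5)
Your proof is essentially the paper's proof: same empty boundary, same available-color counts (three at the central $3$-vertex, two at each end), the same greedy verification of (FIX) for each choice of fixed vertex, and the same two cases $I=\{u,v\}$ and $I=\{v,w\}$ for (FORB) with counts $2,1,2$ and $1,3,1$ respectively. The one point you gloss over is the possibility that the two outer $3$-vertices are adjacent: since $H$ must be an \emph{induced} subgraph, you cannot simply declare that ``(RC2) as stated is just the star/path'' --- if $vw\in E(G)$ then $H$ is a triangle, and the degree bookkeeping changes. The paper dispatches this case by observing it is already reducible via Lemma~\ref{lem:twoVerticesOnTriangle}; alternatively, your own greedy argument still closes it directly, since in the triangle all three vertices have $4-3+2=3$ available colors and a triangle with list sizes $3,3,3$ (or $2,2,3$ after a deficit of $2$, or $1,3,3$ after fixing a vertex) is immediately colorable. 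With that one case stated explicitly, your argument is complete and matches the paper's.
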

 
 \begin{proof}
Let $a, b, c$ be consecutive $3$-vertices on a path. 
Let $H$ be the graph on $a, b, c$ and set the boundary $B=\emptyset$. 
If $ac$ is an edge, then this configuration is reducible by Lemma~\ref{lem:twoVerticesOnTriangle}, so we may assume $a$ and $c$ are non-adjacent. 
Hence $b$ has three available colors and $a$ and $c$ both have two available colors.  

(FIX): When the color of $b$ is fixed as $\varphi(b)$, then the coloring can be extended by choosing colors for $a$ and $c$ that are different from $\varphi(b)$, which is possible since $a$ and $c$ have two available colors. 
When the color of $a$ is fixed as $\varphi(a)$, then the coloring can be extended by choosing a color $\varphi(b)$ for $b$ that is not $\varphi(a)$, and then choosing a color for $c$ that is not $\varphi(b)$. 
Fixing the color of $c$ is symmetric.

 (FORB): 
 Let $I\subset V(H)\setminus B$ where $|I|\leq 2$. 
  When $I=\{a, b\}$, $a, b, c$ have one, two, two, respectively, available colors. 
It remains to consider the cases where $|I|=1$ but that is implied by (FIX). 
An $L$-coloring of $H$ can be obtained by choosing an available color $\varphi(a)$ for $a$, choosing an available color $\varphi(b)$ for $b$ that is not $\varphi(a)$, and then choosing  an available color for $c$ that is not $\varphi(b)$.
  When $I=\{a, c\}$, $a, b, c$ have one, three, one, respectively, which is colorable.
It remains to consider the cases where $|I|=1$ but that is implied by (FIX). 
  \end{proof}
 
 \begin{lemma}
  \normalfont (RC3) \textit{is $(\mathcal F,4)$-boundary-reducible.}
 \end{lemma}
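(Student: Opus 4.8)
The plan is to show that the configuration (RC3) — a $d$-vertex $v$ adjacent to $d-2$ vertices of degree $3$, with $v$ either on a $3$-cycle or adjacent to a vertex on a $3$-cycle — is $(\mathcal F,4)$-boundary-reducible by exhibiting an appropriate boundary set and verifying (FIX) and (FORB). First I would set up notation: let $v$ have neighbors, $d-2$ of which, say $x_1,\dots,x_{d-2}$, are $3$-vertices; let the remaining two neighbors of $v$ be $u$ and $w$. Let $H$ be the subgraph induced by $v$ together with $x_1,\dots,x_{d-2}$ (and whatever edges among them are present — note that since $G$ has no $C_4$ and bounded $C_3$-distance, the $x_i$'s are essentially independent except possibly for one of them sharing a triangle edge with $v$). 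The natural boundary is $B = \{u,w\}$ together with the two "other" neighbors of each $x_i$; that is, every vertex outside $H-B$ that is adjacent to some $x_i$ or to $v$ gets put into $B$. Then in $H-B = H$, the vertex $v$ has degree $d-2$ measured in $G$-degree minus... wait — more carefully, each $x_i$ has two neighbors outside $H-B$ (since it has degree $3$ and one neighbor is $v$), so $x_i$ has $k-\deg_G(x_i)+\deg_{H-B}(x_i) = 4-3+1 = 2$ available colors; and $v$ has $4-d+(d-2) = 2$ available colors. So every vertex of the configuration has exactly $2$ available colors, and $H-B$ is a star (or a star plus one triangle edge) $K_{1,d-2}$ centered at $v$.

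The key structural point is that $\mathcal F$ contains all configurations where two $3$-cycles are at distance $\le 1$, and by hypothesis $v$ is on or adjacent to a $3$-cycle. Hence any $\mathcal F$-forbidding set $I \subseteq V(H)\setminus B$ cannot create a new triangle near that existing one: in particular, $I$ cannot contain two vertices with a common neighbor in $H-B$ whose addition of an external vertex would make a second nearby triangle, and it cannot contain $v$ together with two of the $x_i$ (that would, via the hypothetical common neighbor, create a $C_4$ through $v$), nor two $x_i$'s (a common neighbor of $x_i,x_j$ plus $v$ gives a $C_4$). So I expect to prove that every $\mathcal F$-forbidding $I$ has $|I|\le 1$, or $I=\{v\}$, or $I$ is a single $x_i$ — the analysis is exactly parallel to (RC2). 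For (FIX): fixing $v$'s color to one value still leaves each $x_i$ a free choice among its $2$ colors avoiding $\varphi(v)$ (a star is greedily colorable from the leaves); fixing $x_i$'s color leaves $v$ with at least one of its two colors avoiding $\varphi(x_i)$, then the other $x_j$'s follow; if there is a triangle edge, say $x_1v$ is on a $3$-face with a boundary vertex, this is already subsumed since $x_1$ is a leaf. For (FORB) with $|I|\le 1$ this reduces to (FIX); for $|I|\ge 2$ the claim is that $I$ is not $\mathcal F$-forbidding because adding a universal vertex to $I$ produces either a $C_4$ or a pair of $3$-cycles at distance $\le1$, owing to the pre-existing triangle guaranteed by the definition of (RC3).

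The main obstacle, and the place I would be most careful, is the case analysis showing $I$ fails to be $\mathcal F$-forbidding whenever $|I|\ge 2$, because this is exactly where the "$C_3$ distance at least $2$" and "no $C_4$" hypotheses are both used, and there are several sub-cases depending on whether the guaranteed triangle passes through $v$ or through a neighbor of $v$, and whether $I$ contains $v$. Concretely: if $v,x_i\in I$, a new common neighbor $z$ gives a triangle $z v x_i$; combined with the triangle guaranteed near $v$ this is a pair of $3$-cycles at distance $\le 1$, so $I$ is not $\mathcal F$-forbidding. If $x_i,x_j\in I$ with $i\ne j$, a new common neighbor $z$ gives a $4$-cycle $z x_i v x_j$, again forbidden. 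And if $|I|\ge 2$ but contains only a single $x_i$ plus... there is nothing else of size $\ge 2$ to consider since $V(H)\setminus B = \{v,x_1,\dots,x_{d-2}\}$. Hence the only $\mathcal F$-forbidding sets have size $\le 1$, and (FORB) follows from (FIX), completing the proof. I would write this up mirroring the (RC2) lemma's structure, spelling out the available-color counts once at the start and then dispatching (FIX) and (FORB) in a few lines each.
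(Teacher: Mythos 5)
Your overall strategy matches the paper's, but there is a genuine gap in (FORB) caused by choosing $H$ too small. Recall that an $\mathcal F$-forbidding set is tested inside \emph{$H$ plus one new vertex}, not inside $G$: $I$ is $\mathcal F$-forbidding if $H$ together with a new vertex joined to all of $I$ contains no member of $\mathcal F$. Your key step for $I=\{v,x_i\}$ is that the new triangle $zvx_i$ together with ``the triangle guaranteed near $v$'' forms two $C_3$'s at distance at most $1$ --- but that guaranteed triangle need not lie in your $H$, which you take to be (the subgraph induced by) $v$ and its degree-$3$ neighbours. If the guaranteed $3$-cycle passes through one of the two non-degree-$3$ neighbours of $v$, or is pendant at distance $1$ from $v$, it is entirely invisible to $H+z$, and then $\{v,x_i\}$ \emph{is} $\mathcal F$-forbidding for your $H$. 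In that case (FORB) demands colourability when the adjacent pair $v,x_i$ each have $2-1=1$ available colour, which can fail (both singletons equal). This is exactly why the paper's $H$ is taken to be $A\cup\{v\}\cup\{w_1,w_2,w_3\}$ with boundary $B=V(H)\setminus(A\cup\{v\})$: the triangle's extra vertices sit in the boundary, so they impose no colouring obligations, but they are present in $H$ so that $H+z$ exhibits two close triangles and $\{v,x_i\}$ is disqualified. Relatedly, your description of $B$ as containing vertices \emph{outside} $H$ is not allowed ($B\subsetneq V(H)$ by definition); the whole point of the boundary mechanism is to let you enlarge $H$ with such context vertices.

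A second, smaller slip: your blanket count of two available colours per $x_i$ is wrong in the sub-case where the guaranteed triangle is $vx_ix_j$ with both $x_i,x_j$ of degree $3$. Then $\deg_{H-B}(x_i)=\deg_{H-B}(x_j)=2$, so each has $4-3+2=3$ available colours, and this third colour is actually needed: with only two colours each on a triangle $v,x_i,x_j$ having lists of sizes $2,2,2$, fixing $v$ can leave $x_i,x_j$ forced into the same colour. Your (FIX) discussion only treats the case where the triangle's third vertex is external, so the internal-edge case is not covered by what you wrote. Both issues are repaired by adopting the paper's choice of $H$ and $B$ and recomputing the lists accordingly; the rest of your argument (greedy colouring of the star for (FIX), the $C_4$ obtained from $\{x_i,x_j\}$, and reducing $|I|=1$ to (FIX)) then goes through as in the paper.
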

 
 \begin{proof}
 Let $v$ be a $d$-vertex adjacent to $d-2$ vertices of degree 3 where $v$ is either on a $3$-cycle or is adjacent to a vertex on a $3$-cycle.
 Denote the $3$-cycle as $w_1w_2w_3$, let $A$ be the set of $3$-vertices in $N(v)$.
 Let $H$ be the graph on $A\cup\{v\}\cup\{w_1,w_2,w_3\}$ and set the boundary  $B=V(H)\setminus (A\cup \{v\})$.
 
 Note that $H-B$ is isomorphic to either $K_{1,d-2}$ or $K_{1,d-2}$ with one additional edge. 
Observe that every vertex $x$ of $H-B$ has two available colors if $x\in A\cup\{v\}$ and has three available colors if $x\in\{w_1, w_2, w_3\}\setminus\{v\}$. 
 
 (FIX): If $H-B$ is a star, then fixing a color of an arbitrary vertex of $H-B$ can be extended to all of $H-B$ since all vertices have at least two available colors.
 If $H-B$ contains a $C_3$, then fixing a color of an arbitrary vertex of the $C_3$ can be extended to all of $H$ since the vertices in $\{w_1, w_2, w_3\}\setminus\{v\}$ have at least three available colors.

 (FORB): Let $I \subset V(H) \setminus B$ where $|I|\leq 2$. 
 If $|I|=2$, then $I$ is not $\mathcal{F}$-forbidding since connecting a new vertex to any pair of vertices in $H$ always creates either two $C_3$ at distance at most $1$ or a $4$-cycle.
It remains to consider the cases where $|I|=1$ but that is implied by (FIX). 
 \end{proof}

For each vertex $v$ and each face $f$, let $ch(v)=\deg(v)-2$ and $ch(f)=-2$.
By Euler's formula the sum of initial charge is negative: $\sum_{v\in V(G)}(\deg(v)-2)+\sum_{f\in F(G)}(-2)=2|E(G)|-2|V(G)|-2|F(G)|=-4$.
Consider a vertex $v$ that is adjacent to a $3$-vertex $w$. If $w$ is on a $3$-face $f$ that is not incident with $v$, then $f$ is a {\it pendent $3$-face} of $v$. 
The discharging rules are the following: 
\begin{itemize}
    \item [(D1)] Every 3-vertex sends charge 1/3 to each incident face.
    \item [(D2)] Every $4^+$-vertex $v$ sends 2/3 to each incident $3$-face and sends 1/3 to each pendent $3$-face. 
    The remaining charge is uniformly distributed among its incident $4^+$-faces. 
\end{itemize}

We now check that each vertex and each face has non-negative final charge. 

\begin{lemma}
Each vertex has non-negative final charge. 
\end{lemma}
\begin{proof}
By (RC1), there is no $2^-$-vertex. 
If $v$ is a $3$-vertex, then $ch^*(v)=1-\frac{1}{3}\cdot3=0$ by (D1). 

Suppose $v$ is a $4^+$-vertex. 
If $v$ is on a $3$-face, then $v$ has no pendent $3$-faces since there are no $3$-cycles with distance at most $1$. 

Since $ch(v)= \deg(v)-2\geq 2$, $v$ has non-negative final charge. 
If $v$ is not on a $3$-face, then it has at most $\deg(v)-3$ pendent $3$-faces by (RC3). 
Since $\deg(v)-2\geq \frac{\deg(v)-3}{3}$, $v$ has non-negative final charge. 
\end{proof}

\begin{lemma}
Each face has non-negative final charge.
\end{lemma}
\begin{proof}
Note first that there are no $4$-faces since there are no $4$-cycles. 

Suppose $f$ is a $3$-face. 
Each $4^+$-vertex on $f$ sends $\frac{2}{3}$ to $f$ by (D2). 
If $v$ is a $3$-vertex on $f$, then the neighbor $w$ of $v$ not on $f$ is a $4^+$-vertex by (RC3), so $f$ is a pendent $3$-face of $w$. 
Thus, $v$ and $w$ send $\frac{1}{3}$ and $\frac{1}{3}$ to $f$ by (D1) and (D2), respectively. 
Therefore, each vertex on $f$ guarantees at least $\frac{2}{3}$ to be sent to $f$, so $ch^*(f)\geq -2+3\cdot\frac{2}{3}=0$.

Suppose $f$ is a $5$-face, and let $v_1,v_2,v_3,v_4,v_5$ be the vertices on $f$ in clockwise ordering. 
By (RC2), $f$ cannot be incident with four $3$-vertices. 
If $f$ is incident with at most one $3$-vertex, then $f$ is incident with at least four $4^+$-vertices. 
Since each $4^+$-vertex on $f$ sends at least 
$\frac{5}{12}$ to $f$ by (D2), $ch^*(f) \geq -2 + 4\cdot\frac{5}{12}+\frac{1}{3} = 0$.
Therefore, we may assume $f$ is incident with either two or three $3$-vertices. 

Suppose $v_4, v_5$ are $3$-vertices on $f$, so $v_1$ and $v_3$ are $4^+$-vertices by (RC2). 
Since two $3$-vertices $v_4$ and $v_5$ are adjacent to each other, every vertex at distance at most $1$ from either $v_4$ or $v_5$ cannot be on a $3$-cycle by (RC3).
Moreover, for $i\in\{1, 3\}$, if $v_i$ is a $4$-vertex (resp.\ $5$-vertex), then $v_i$ has no (resp.\ at most one) pendent $3$-face by (RC3), so it sends $\frac{1}{2}$ (resp.\ $\frac{8}{15}$) to $f$ by (D2). 
If $v_i$ is a $6^+$-vertex, then it sends $\frac{1}{2}$ to $f$ by (D2). 
Thus, each of $v_1$ and $v_3$ sends at least $\frac{1}{2}$ to $f$, so $ch^*(f) \geq -2 + 2\cdot\frac{1}{2}+3\cdot\frac{1}{3} = 0$.

Suppose $v_2, v_5$ are $3$-vertices on $f$, so all other vertices on $f$ are $4^+$-vertices. 
Note that a $4^+$-vertex is guaranteed to send at least $\frac{5}{12}$ to $f$. 
If $v_1$ is on a $3$-face, then it must be a $5^+$-vertex by (RC3).
Thus, $ch(f)\geq -2+\frac{7}{12}+2\cdot\frac{5}{12}+2\cdot\frac{1}{3}=\frac{1}{12}>0$.
Now assume $v_1$ is not on a $3$-face. 
By (RC3), if $v_1$ is a $4$-vertex (resp.\ $5$-vertex), then $v_1$ has no (resp.\ at most one) pendent $3$-face, so it sends $\frac{1}{2}$ (resp.\ $\frac{8}{15}$) to $f$ by (D2). 
If $v_1$ is a $6^+$-vertex, then it sends $\frac{1}{2}$ to $f$ by (D2). 
Thus, $v_1$ sends at least $\frac{1}{2}$ to $f$, so $ch^*(f) \geq -2 + \frac{1}{2}+2\cdot\frac{5}{12}+2\cdot\frac{1}{3}=0$.

Finally, if $f$ is a $6^+$-face, then each vertex on $f$ sends at least $\frac{1}{3}$ to $f$ by (D1) and (D2), so $ch^*(f)\geq -2+6\cdot\frac{1}{3}=0$.
\end{proof}

For completeness we include a proof of the following Observation~\ref{obs:degeneracy} used in Table~\ref{tab:flex-choose}.

\begin{obs}
\label{obs:degeneracy}
Every planar graph without $C_4$ such that the distance between $C_3$'s is at least 2 is 3-degenerate.
\end{obs}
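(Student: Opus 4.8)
The plan is a standard discharging argument, analogous to the ones used above for the reducibility lemmas, but now with the much simpler goal of merely bounding the degeneracy rather than establishing flexibility. Let $G$ be a planar graph without $C_4$ in which every two triangles are at distance at least $2$ (i.e.\ no vertex of one triangle is adjacent to or equal to a vertex of another triangle). Suppose for contradiction that $G$ is not $3$-degenerate, and take $G$ to be a vertex-minimal such graph; then $G$ has minimum degree at least $4$, and every proper subgraph of $G$ is $3$-degenerate, which in fact just tells us that $\delta(G)\ge 4$ is the only structural fact we need to contradict. Fix a plane embedding. Assign initial charges $ch(v)=\deg(v)-4$ to each vertex and $ch(f)=|f|-4$ to each face; by Euler's formula the total charge is $-8<0$, exactly as in the proof of Theorem~\ref{diamond}.

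The key structural inputs are: (a) $\delta(G)\ge 4$, so every vertex already has non-negative initial charge; (b) there are no $4$-faces, since $G$ has no $C_4$; (c) every vertex lies on at most one $3$-face, and indeed two $3$-faces cannot share a vertex, nor can their vertices be adjacent, by the distance-$2$ condition; (d) consequently a $3$-face has no other $3$-face incident to any of its vertices, so the three vertices of a $3$-face are each incident to exactly one triangle. The single discharging rule I would use is: every vertex sends $\tfrac13$ to each incident $3$-face. (Other constants work too; $\tfrac13$ is the natural choice since a $3$-face needs to collect its deficit of $1$ from three vertices.)

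Now I check final charges. A vertex $v$ is incident to at most one $3$-face by (c), so it sends out at most $\tfrac13$, and since $ch(v)=\deg(v)-4\ge 0$ we get $ch^*(v)\ge -\tfrac13$. This is not quite enough, so I would sharpen the rule or the accounting: the clean fix is to have each vertex send $\tfrac13$ to its incident $3$-face \emph{only}, and observe that a vertex incident to a $3$-face actually has degree at least $4$ and only one such face, so $ch^*(v)=\deg(v)-4-\tfrac13\cdot[\text{$v$ on a }3\text{-face}]$; this is negative only if $\deg(v)=4$ and $v$ is on a $3$-face. To handle that residual case I would instead let each vertex send $\tfrac{|f|-4}{|f|}\le$ \ldots — or, more simply, route charge from $4^+$-faces: since there are no $3^+$- wait, there are no $4$-faces, every non-triangular face is a $5^+$-face with $ch(f)=|f|-4\ge 1$, and a $3$-face is surrounded (across its three edges) by three distinct $5^+$-faces (distinct because two triangles can't share an edge or be close). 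So the better rule is: every $5^+$-face sends $\tfrac13$ across each incident edge to an adjacent $3$-face. Then a $3$-face receives $3\cdot\tfrac13=1$ and ends at $ch^*=|f|-4+1=0$. A $5^+$-face $f$ sends at most $\tfrac13$ per edge, so $ch^*(f)\ge |f|-4-\tfrac{|f|}{3}=\tfrac{2|f|}{3}-4\ge \tfrac{10}{3}-4=-\tfrac23$ — still negative for a $5$-face.

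So the genuinely correct bookkeeping needs both vertices \emph{and} faces to contribute, and this balancing is the one place requiring care — it is the main (minor) obstacle. The plan I would actually write up: charge $ch(v)=\deg(v)-4$, $ch(f)=|f|-4$; rule (D1): every $4^+$-vertex incident to a $3$-face sends $\tfrac16$ to it; rule (D2): every $5^+$-face sends $\tfrac16$ across each incident edge to an adjacent $3$-face. A $3$-face gets $3\cdot\tfrac16$ from its vertices (each a $4^+$-vertex by $\delta\ge4$) plus $3\cdot\tfrac16$ from the three surrounding $5^+$-faces, total $1$, so $ch^*=0$. A $4^+$-vertex sends at most $\tfrac16$ (only one incident $3$-face) so $ch^*\ge \deg(v)-4-\tfrac16\ge 4-4-\tfrac16$ — hmm, still $-\tfrac16$. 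The clean resolution is to use $\delta\ge4$ to note a degree-$4$ vertex on a $3$-face: its two neighbors on that face are also degree $\ge4$, so I can make the $3$-face demand only from the surrounding faces. Final plan: rule (D): every $5^+$-face sends $\tfrac13$ across each edge shared with a $3$-face. Then a $3$-face ends at $0$; a vertex sends nothing so ends at $\deg(v)-4\ge 0$; a $5^+$-face $f$ sends $\tfrac13$ only across edges bordering a $3$-face, and at most $\lfloor |f|/2\rfloor$ of its edges can border $3$-faces (triangles around $f$ are pairwise non-adjacent, so no two consecutive edges of $f$ border triangles), giving $ch^*(f)\ge |f|-4-\tfrac13\lfloor|f|/2\rfloor\ge 0$ for all $|f|\ge 5$ (check: $|f|=5$: $1-\tfrac23>0$; $|f|=6$: $2-1>0$; and it only improves). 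Summing, total final charge $\ge 0$, contradicting that it equals $-8$. Hence $G$ is $3$-degenerate. The only subtlety to verify carefully is claim (c)/(d) — that the distance-$\ge2$ hypothesis forbids two triangles from touching, hence forbids two consecutive triangle-bordering edges around a $5^+$-face, and forbids a vertex from lying on two triangles — all of which are immediate from the definition of distance between $C_3$'s.
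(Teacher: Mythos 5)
Your final plan is correct and is essentially the paper's own argument: the same initial charges $\deg(v)-4$ and $|f|-4$, the same structural facts ($\delta(G)\ge 4$, no $4$-faces, and each $3$-face surrounded by $5^+$-faces because the distance-$2$ hypothesis keeps triangles apart), and a discharging step that moves the surplus of the $5^+$-faces onto the deficient $3$-faces. The only difference is in the routing --- the paper sends $1/5$ from each $5^+$-face to each incident vertex and then has every vertex forward all of its charge to its incident $3$-face, whereas you send $1/3$ directly across each edge a $5^+$-face shares with a triangle (using that no two consecutive edges of a $5^+$-face can border triangles); both ledgers close, so in a final write-up you should simply delete the abandoned intermediate attempts and keep only the last rule.
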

\begin{proof}
Suppose to the contrary that there exists a planar graph with minimum degree at least 4 but neither $C_4$ nor $C_3$ distance at most 1.
We use a simple discharging argument without any reducible configurations.
Let the initial charge of each vertex $v$ and each face $f$ be $\deg(v)-4$ and $|f|-4$, respectively. 
Therefore only 3-faces have negative initial charge.
The discharging rules are the following (They will be applied in the order they are presented):
\begin{itemize}
  \item each $5^+$-face sends 1/5 to each incident vertex.
  \item each vertex sends all its charge to its incident 3-face if it exists.
\end{itemize}
The final charge of each vertex remains non-negative. 
Each $5^+$-face $f$ has non-negative final charge since $(|f|-4) / |f|) \ge 1/5$.
Let $f$ be a $3$-face and let $v$ be a vertex on $f$. 
Each face incident with $v$ except $f$ is a $5^+$-face, since there are neither $C_4$ nor $C_3$ distance at most $1$. 
Since the minimum degree is at least $4$, $v$ receives charge at least $\frac{3}{5}$, all of which is sent to $f$. 
Thus, the final charge of $f$ is at least $-1+9/5>0$.
\end{proof}

\section{Proof of Theorem~\ref{C4C5}}\label{sec:C4C5}

In this section, we prove Theorem~\ref{C4C5}. 
Let ${\mathcal F}=\{C_4,C_5, C_6\}$ and let $G$ be a counterexample with the minimum number of vertices. 
Fix a plane embedding of $G$ and note that by minimality, $G$ must be connected.  
Let $L$ be a list assignment on $V(G)$ where each vertex receives at least four colors. 
The following configurations cannot appear in $G$:
\begin{itemize}
  \item [(RC1)] A $2^-$-vertex.
  \item [(RC2)] A $3$-cycle incident with two $3$-vertices.
  \item [(RC3)] A $4$-vertex $v$ on two $(3, 4, 4)$-faces.
  \item [(RC4)] Two adjacent $4$-vertices on $(3, 4, 4)$-faces that are mutually disjoint.
\end{itemize}
\begin{figure}[H]
  \centering
  \includegraphics[height=3cm]{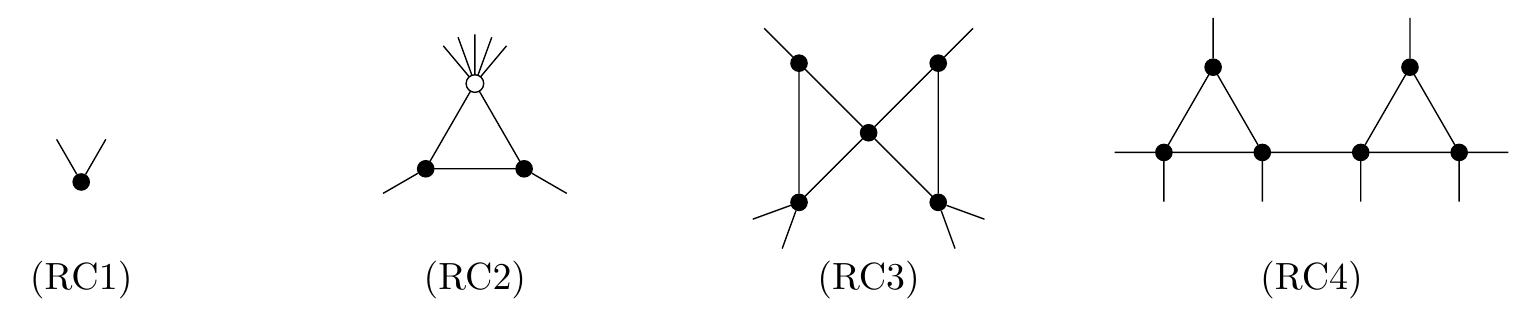}
  \caption{Reducible configurations}
\end{figure}

First, observe that (RC1) is reducible by Lemma~\ref{lem:singleVertex} and (RC2) is reducible by Lemma~\ref{lem:twoVerticesOnTriangle}.
 
\begin{lemma}
  \normalfont (RC3) \textit{is $(\mathcal \{C_4\},4)$-boundary-reducible with empty boundary.}
 \end{lemma}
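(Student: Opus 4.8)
The plan is to take $H$ to be the subgraph consisting of the $4$-vertex $v$, the two $(3,4,4)$-faces incident with $v$, and the two $3$-vertices on those faces, and to show that $H$ with empty boundary satisfies the two conditions of $(\{C_4\},4)$-boundary-reducibility. Label the configuration so that $v$ lies on a $3$-face $vx_1x_2$ and a $3$-face $vy_1y_2$, where $x_i$ (say $x_1$) and $y_i$ (say $y_1$) are the $3$-vertices; the other two neighbours of $v$ are $4$-vertices. Since the graph has no $C_4$, the two $3$-faces share only the vertex $v$, so $H-B = H$ has exactly two triangles meeting at $v$, and $v$ has no further neighbours. The first step is a bookkeeping count of available colors: $v$ has $k-\deg_G(v)+\deg_H(v) = 4-4+4 = 4$ available colors, each $3$-vertex has $4-3+2 = 3$ available colors, and each remaining vertex of $H$ (the $4$-vertices on the triangles, which lie in $B$ — so actually no vertices remain outside $\{v\}\cup\{3\text{-vertices}\}$ once we fix $B$ appropriately). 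I would set $B$ to be the two $4$-vertices on the triangles other than $v$; then $H-B$ is just $v$ together with its two pendant $3$-vertices, i.e.\ a path $P_3$ centered at $v$ (the $3$-faces' edges among boundary vertices are gone). Wait — re-examining, the cleaner choice is $B=\emptyset$ as the statement asserts, and then $H-B=H$ contains the two triangles; I will verify the count with that choice: $v$ gets $4$ colors, each of the two $3$-vertices gets $3$, and each of the two other triangle vertices, which are $4$-vertices of degree $4$ in $G$ but of degree (say) $2$ or $3$ in $H$, gets $4-4+\deg_H(\cdot)$ colors.

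The second step is to verify \textbf{(FIX)}: for each vertex $u\in V(H)$, after reducing $u$'s list to a single color, $H$ is still $L$-colorable. This follows from a greedy/Hall-type argument once we observe $H$ is sparse (two triangles glued at a vertex, plus possibly a pendant structure): order the vertices so that the fixed vertex is colored first, then color the remaining vertices in an order where each subsequent vertex has strictly more available colors than already-colored neighbors it sees — the $3$-vertices, having $3$ colors, are never blocked by the $\le 2$ colors used on a triangle, and $v$ with $4$ colors is never blocked by its $\le 3$ colored neighbors. I expect each subcase (fixing $v$, fixing a $3$-vertex, fixing one of the other triangle vertices) to be a one-line greedy argument.

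The third step is \textbf{(FORB)}: for every $\{C_4\}$-forbidding set $I\subseteq V(H)$ with $|I|\le 2$, after decreasing each list in $I$ by one color, $H$ is $L$-colorable. The key structural point — and the place where the "no $C_4$" hypothesis does the work — is that if $|I|=2$ then adding a new common neighbor of the two vertices of $I$ would create a $C_4$ whenever those two vertices lie at distance $2$ in $H$; one checks that every pair of vertices of $H$ is either adjacent (so a common neighbor makes a triangle, fine, but we must rule out $C_4$ via a chord — actually here the relevant obstruction is distance-$2$ pairs) or at distance $2$ via a path in $H$, and in the latter case $I$ is not $\{C_4\}$-forbidding. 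Thus only $|I|\le 1$ cases survive, and those are implied by (FIX). The main obstacle is just being careful about exactly which pairs of vertices of $H$ are at distance $2$ and hence automatically non-forbidding — i.e.\ confirming that the only pairs that could be $\{C_4\}$-forbidding are pairs already handled by (FIX); once the adjacency structure of this small configuration is pinned down (which depends mildly on whether the two "outer" $4$-vertices coincide or are adjacent, cases excluded or handled by no-$C_4$ and by the definition of the configuration), the verification is routine.
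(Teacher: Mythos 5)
Your configuration, boundary choice ($B=\emptyset$ on the five vertices $v,u_1,u_2,w_1,w_2$), list-size bookkeeping ($4$ for $v$, $3$ for each $3$-vertex, $2$ for each outer $4$-vertex), and greedy verification of (FIX) all match the paper's proof. The gap is in (FORB) for $|I|=2$. You only rule out the non-adjacent (distance-$2$) pairs as $\{C_4\}$-forbidding, and for adjacent pairs you explicitly stall (``a common neighbor makes a triangle, fine, but we must rule out $C_4$ via a chord''), leaving those pairs neither shown to be non-forbidding nor shown to be colorable after the list reduction. The missing observation is that \emph{every} pair of vertices of $H$ --- adjacent or not --- already has a common neighbour inside $H$ (either $v$ or the third vertex of their shared triangle), so a new vertex $z$ joined to both vertices of the pair closes a $4$-cycle through that existing common neighbour (e.g.\ for the adjacent pair $\{u_1,u_2\}$ the cycle is $z u_1 v u_2 z$). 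This is exactly the paper's one-line argument that no $2$-element subset of $V(H)$ is $\{C_4\}$-forbidding, after which only $|I|=1$ remains and is implied by (FIX).

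The gap is easily repaired: either add the common-neighbour observation above, or, if you insist on treating adjacent pairs as potentially forbidding, check colorability directly (for $I=\{u_1,u_2\}$ the lists have sizes $2,1,4,3,2$ and a greedy order $u_2,u_1,v,w_2,w_1$ succeeds). As written, though, your stated conclusion ``only $|I|\le 1$ cases survive'' does not follow from the argument you give.
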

 
 \begin{proof}
    Let $v$ be a $4$-vertex on two $3$-cycles $vu_1u_2$ and $vw_1w_2$, where $\deg(u_1)=\deg(w_1)=3$ and $\deg(u_2)=\deg(w_2)=4$.  
    Let $H$ be the graph on $u_1,u_2,v,w_1,w_2$ and set the boundary $B=\emptyset$. 
 
 (FIX): If vertex $v$ is fixed, then the coloring can be extended as $u_1$ and $w_1$ have lists of size 3 and $u_2$ and $w_2$ have lists of size 2. 
 Fixing any other vertex, say $u_1$, allows us to first color $u_2$, which then the leaves $v$ with at least two available colors. Therefore the coloring can be extended as  the list sizes of $w_1,w_2,v$ are $2,2,3$.  
 Fixing any other vertex in $H-v$ is handled in a similar fashion.  
 
 (FORB): Let $I \subset V(H) \setminus B$ where $|I|\leq 2$. 
 If $|I|=2$, then $I$ is not $\{C_4\}$-forbidding since connecting a new vertex to any pair of vertices in $H$ always creates a $C_4$. 
 It remains to consider the cases where $|I|=1$ but that is implied by (FIX).
 \end{proof}

 \begin{lemma}
   \normalfont (RC4) \textit{is $(\{C_4, C_5\},4)$-boundary-reducible with empty boundary.}
 \end{lemma}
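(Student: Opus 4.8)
The plan is to follow the same template used for (RC3): set up the configuration explicitly, identify the list sizes available to each vertex of $H-B$ (which here equals $H$ since $B=\emptyset$), verify (FIX) by a short case analysis on which vertex is fixed, and verify (FORB) by observing that any candidate set $I$ of size $2$ fails to be $\{C_4,C_5\}$-forbidding.

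First I would describe the configuration precisely. Let $v_1,v_2$ be two adjacent $4$-vertices, where $v_1$ lies on a $(3,4,4)$-face $v_1x_1x_2$ and $v_2$ lies on a $(3,4,4)$-face $v_2y_1y_2$, with $\deg(x_1)=\deg(y_1)=3$, $\deg(x_2)=\deg(y_2)=4$, and these two $3$-faces mutually disjoint (in particular $\{x_1,x_2\}\cap\{y_1,y_2\}=\emptyset$ and the edge $v_1v_2$ is not on either $3$-face). Let $H$ be the graph induced by $\{v_1,v_2,x_1,x_2,y_1,y_2\}$ and set $B=\emptyset$. Counting available colors: each $4$-vertex that is internal to $H$ retains $4-\deg_G+\deg_H$ colors, so $v_1$ and $v_2$ (degree $4$, with $\deg_H(v_i)=3$) have $3$ available colors, $x_1$ and $y_1$ (degree $3$, with $\deg_H=2$) have $3$ available colors, and $x_2$, $y_2$ (degree $4$, with $\deg_H=2$) have $2$ available colors. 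One should double-check that no extra edges are forced among these vertices beyond the two triangles and the edge $v_1v_2$; the hypothesis that the two $3$-faces are disjoint and forbidding $C_4$ in $G$ rules out the problematic adjacencies.

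For (FIX): the graph $H$ is essentially two triangles joined by an edge between a degree-$3$-in-$H$ vertex of each. I would order the vertices as $x_2,x_1,v_1,v_2,y_1,y_2$ (a near-path/"caterpillar" order) and greedily color: when fixing any single vertex, color the remaining vertices in an order that always reaches a vertex having more available colors than already-colored neighbors in $H$. Concretely, fixing $v_1$: color $x_2$ (2 colors, 0 colored neighbors), then $x_1$ (3 colors, $\le 2$ colored neighbors $x_2,v_1$), then $v_2$ ($3$ colors, $1$ colored neighbor $v_1$), then $y_1,y_2$ as on the other triangle; fixing $x_2$ or $x_1$ is handled by first finishing that triangle, which leaves $v_1$ with $\ge 2$ colors, then proceeding along $v_1v_2$ into the other triangle; the cases fixing $v_2,y_1,y_2$ are symmetric. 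Each step leaves at least one legal color, so (FIX) holds. This is a routine finite check; the only mild care needed is the vertex with only $2$ available colors ($x_2$, $y_2$), which is why it is colored first in each triangle.

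For (FORB): let $I\subseteq V(H)$ with $|I|\le 2$. If $|I|=2$, adding a new vertex adjacent to both vertices of $I$ always creates a short cycle forbidden by $\mathcal F=\{C_4,C_5\}$ — any two vertices of $H$ are joined by a path of length $2$ or $3$ inside $H$ (indeed $H$ has diameter $\le 3$ and no two vertices at distance exactly $1$ fail to lie on a common short cycle once the new vertex is added; every pair is connected by a path of length $2$ or $3$, closing to a $C_4$ or $C_5$ through the apex), so $I$ is not $\{C_4,C_5\}$-forbidding. Hence we only need $|I|\le 1$, and that case reduces each relevant list by at most one; since (FIX) already produced a coloring with one vertex's list cut to size $1$, the weaker condition of merely deleting one color from one vertex follows a fortiori. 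This completes the verification, and hence the proof that (RC4) is $(\{C_4,C_5\},4)$-boundary-reducible with empty boundary.

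The main obstacle I anticipate is purely bookkeeping: being certain that the disjointness of the two $3$-faces together with the $C_4$-freeness of $G$ forbids exactly the adjacencies that would otherwise shrink the available-color counts below what the greedy order needs (e.g. ruling out $x_2$ being adjacent to $v_2$, or $x_2=y_2$), and checking that every pair of vertices of $H$ really is joined by a path of length $2$ or $3$ so that (FORB) is vacuous for $|I|=2$. Once the adjacency structure of $H$ is pinned down, both (FIX) and (FORB) are short finite checks.
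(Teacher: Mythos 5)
Your setup, list-size bookkeeping, and (FIX) argument match the paper's proof in substance (the paper also fixes $v_1$ and observes the two triangles have residual lists of sizes $1,2,3$ and $2,2,3$, and for any other fixed vertex first finishes that vertex's triangle and then crosses the edge $v_1v_2$). The minor miscounts in your greedy order (e.g.\ $x_2$ has one colored neighbor, namely $v_1$, not zero) do not affect correctness.

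However, your (FORB) argument has a genuine gap: the claim that every $2$-element subset of $V(H)$ fails to be $\{C_4,C_5\}$-forbidding is false for $I=\{v_1,v_2\}$. These two vertices are adjacent, and precisely because the two triangles are disjoint and $G$ is $C_4$-free, $v_1$ and $v_2$ have no common neighbor in $H$ and no connecting path of length $3$ through the triangles (either would close a $C_4$ in $G$); the only $v_1$--$v_2$ path in $H$ is the edge itself. Hence attaching a new vertex to both of them creates only a triangle, which is not in $\mathcal{F}=\{C_4,C_5\}$, so $I=\{v_1,v_2\}$ \emph{is} $\{C_4,C_5\}$-forbidding and (FORB) must actually be verified for it rather than dismissed. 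This is exactly the case the paper treats explicitly: with both lists reduced, $v_1$ and $v_2$ each retain $2$ available colors, so they can be given distinct colors; each triangle is then left with two uncolored vertices having $1$ and $2$ available colors, and coloring the $1$-list vertex first completes the extension. Your proof needs this additional case to be complete; all other pairs are indeed non-forbidding by the distance-$2$-or-$3$ path argument you give.
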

 
 \begin{proof}
   Let $v_1,v_2$ be adjacent $4$-vertices on disjoint $3$-cycles $v_1u_1u_2$ and $v_2w_1w_2$, where $\deg(u_1)=\deg(w_1)=3$ and $\deg(u_2)=\deg(w_2)=4$.
 Let $H$ be the graph on $u_1,u_2,v_1,v_2,w_1,w_2$ and set the boundary $B=\emptyset$. 
 
 (FIX): If vertex $v_1$ is fixed, then we are left with two $3$-cycles.
The coloring can be extended since the two $3$-cycles have lists of sizes $1,2, 3$ and  $2,2, 3$.
If we fix any other vertex $x$, then the $v_i$ that lies on the same $3$-cycle as $x$ has at least one available color and therefore the coloring can be extended.

 (FORB): Let $I \subset V(H) \setminus B$ where $|I|= 2$. 
Only subsets of $\{v_1, v_2\}$ are $\{C_4, C_5\}$-forbidding.
In that case, we assign $v_1$ and $v_2$ distinct colors, leaving the two vertices on each $3$-cycle with one and two available colors.  The coloring can be extended by coloring the vertices with one available color first, then the remaining vertex on each $3$-cycle. 
 
 As was the case previously, the case where $|I|=1$ is implied by (FIX).
\end{proof} 
 
For each vertex $v$ and each face $f$, let $ch(v)=-2$ and $ch(f)=|f|-2$. 
By Euler's formula the sum of initial charge is negative: $\sum_{v\in V(G)}(-2)+\sum_{f\in F(G)}(|f|-2)=-2|V(G)|+2|E(G)|-2|F(G)|=-4$.
Note that there are no $2^-$-vertices by (RC1), so there are no $4$-faces and no $5$-faces. 
We remark that even though $C_6$ is forbidden, there might still be $6$-faces; these can appear only in the form of two embedded $3$-cycles. 

The discharging rules are as follows: 

\begin{itemize}
  \item [(D1)] Every $6^+$-face uniformly distributes its initial charge to every incident vertex.
  \item [(D2)] Let $f$ be a $3$-face. 
  \begin{itemize}
  \item [(D2A)] If $f$ is incident with a $3$-vertex, then $f$ sends $4/7$ to each incident $3$-vertex and uniformly distributes its remaining charge to each incident $4$-vertex. 
  \item [(D2B)] If $f$ is not incident with a $3$-vertex, then $f$ sends $3/7$ to each incident $4$-vertex on a $(3, 4, 4)$-face, sends $1/7$ to each incident $4$-vertex on a $(3, 4, 5^+)$-face, and sends $2/7$ to each incident $4$-vertex on two $(4^+, 4^+, 4^+)$-faces.
  \end{itemize}
\end{itemize}

Note that a $6$-face and a $7^+$-face sends $2/3$ and at least $5/7$, respectively, to each incident vertex by (D1).
Also, since each $3$-face is incident with at most one $3$-vertex by (RC2), it sends at least $1/7$ to each incident $4$-vertex on two 3-faces by (D2). 

We now check that each vertex and each face has non-negative final charge.  

\begin{lemma}
Each vertex has non-negative final charge. 
\end{lemma}
\begin{proof}
If $v$ is a $5^+$-vertex, then it is incident with at least three $6^+$-faces, so $ch^*(v)\geq -2+3\cdot\frac{2}{3}=0$ by (D1).
Suppose $v$ is a $3$-vertex. 
If $v$ is on a $3$-face, then $v$ is also on two $7^+$-faces, so $ch^*(v)\geq-2+2\cdot\frac{5}{7}+\frac{4}{7}=0$ by (D1) and (D2A).
If $v$ is not on a $3$-face, then $v$ is on three $6^+$-faces, so $ch^*(v)\geq-2+3\cdot\frac{2}{3}=0$ by (D1).

Suppose $v$ is a $4$-vertex.
If $v$ is on at most one 3-face, then it is on at least three $6^+$-faces, so $ch^*(v)\geq-2+3\cdot\frac{2}{3}=0$ by (D1). 
Now assume $v$ is on two $3$-faces $f_1$ and $f_2$. 
Note that $v$ is also on two $7^+$-faces, which each sends $5/7$ to $v$ by (D1). 
If both $f_1$ and $f_2$ are $(4^+, 4^+, 4^+)$-faces, then each of $f_1$ and $f_2$ sends $2/7$ to $v$, so $ch^*(v)\geq -2+2\cdot\frac{5}{7}+2\cdot\frac{2}{7}=0$. 
If $f_1$ is a $(3, 4, 5^+)$-face, which sends $3/7$ to $v$ by (D2A), then $f_2$ sends at least $1/7$ to $v$ by (D2), so $ch^*(v)\geq -2+2\cdot\frac{5}{7}+\frac{3}{7}+\frac{1}{7}=0$. 
By (RC3), the only remaining case to consider is when $f_1$ is a $(3, 4, 4)$-face and $f_2$ is a $(4^+, 4^+, 4^+)$-face. 
In this case, $f_1$ and $f_2$ send $3/14$ and $3/7$, respectively, to $v$, so $ch^*(v)\geq -2+2\cdot\frac{5}{7}+\frac{3}{14}+\frac{3}{7}>0$. 
\end{proof}

\begin{lemma}
Each face has non-negative final charge.
\end{lemma}
\begin{proof}
The final charge of every $6^+$-face is $0$ since its initial charge is positive and it distributes its initial charge uniformly to all incident vertices by (D2A). 
Recall that there are no $4$-faces and no $5$-faces. 

Let $f$ be a $3$-face, whose initial charge is $1$. 
Recall that $f$ is incident with at most one $3$-vertex by (RC2). 
If $f$ is incident with a $3$-vertex, then it has non-negative final charge by (D2A). 
If $f$ is not incident with a $3$-vertex, then by (RC4), it sends $3/7$ to at most one vertex by (D2B).
Thus, $ch^*(v)\geq1-\frac{3}{7}-2\cdot\frac{2}{7}=0$.
\end{proof}

\section{Proof of Theorem~\ref{weak}}\label{sec:weak}

In this section, we prove Theorem~\ref{weak}. 
Let ${\mathcal F}=\{\house, K_{2,3}\}$ and let $G$ be a counterexample with the minimum number of vertices. 
Fix a plane embedding of $G$ and note that by minimality, $G$ must be connected.  
Let $L$ be a list assignment on $V(G)$ where each vertex receives at least five colors. 
The following configurations cannot appear in $G$:

  \begin{itemize}
    \item [(RC1)] A $3^-$-vertex.
    \item [(RC2)] A $4$-cycle with two adjacent $4$-vertices. 
    \item [(RC3)] A $4$-cycle with consecutive vertices of degrees $4,5,4$. 
  \item [(RC4)] A $4$-cycle incident with three $5$-vertices and one $4$-vertex.
    \item [(RC5)] A path on four $4$-vertices. 
\end{itemize}

 \begin{figure}[H]
  \centering
  \includegraphics[width=\textwidth]{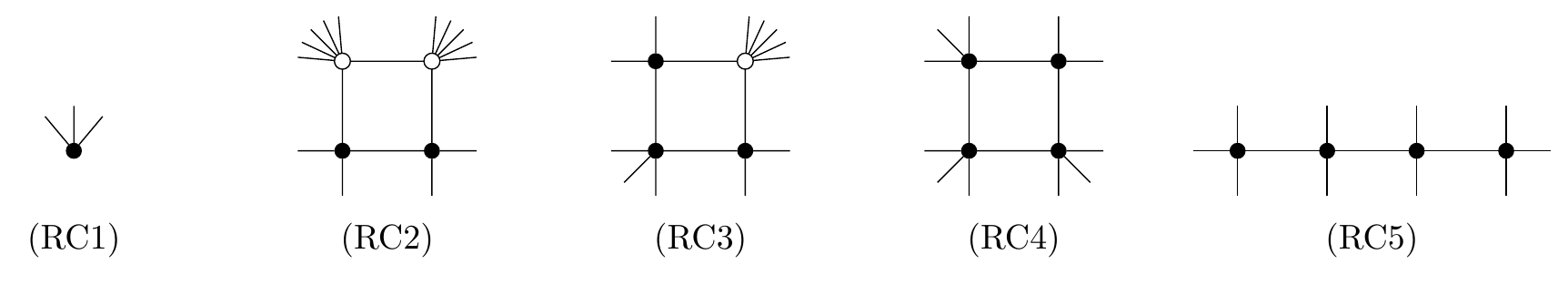}
  \caption{Reducible configurations}
  \label{weakred}
\end{figure}

First, observe that (RC1) is reducible by Lemma~\ref{lem:singleVertex}.
We establish the reducibility of the remaining configurations in the following lemmas.  
Note that (RC4) in Figure \ref{weakred} is only weakly $(\mathcal F,5)$-boundary-reducible but not $(\mathcal F,5)$-boundary-reducible, as the 4-vertex on the face does not satisfy (FIX).

    \begin{lemma}
      \normalfont A 4-cycle $v_1v_2v_3v_4$ with two adjacent $4$-vertices $v_1$ and $v_2$, (RC2), is $(\{\house\},5)$-boundary-reducible with boundary $\{v_3,v_4\}$.
 \end{lemma}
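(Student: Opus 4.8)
The plan is to verify the two conditions of $(\{\house\},5)$-boundary-reducibility for the induced subgraph $H$ on $\{v_1,v_2,v_3,v_4\}$ with boundary $B=\{v_3,v_4\}$, so that $Q\df H-B=\{v_1,v_2\}$ together with the edge $v_1v_2$. The key observation is the bookkeeping on available colors: since $\deg_G(v_1)=\deg_G(v_2)=4$ and lists have size $k=5$, the quantity $k-\deg_G(v_i)+\deg_{G[Q]}(v_i) = 5-4+1 = 2$, so each of $v_1,v_2$ has at least two available colors after the rest of $G$ is colored. (Here I am using that $v_1v_2\in E(G)$; the $4$-cycle guarantees $v_1\sim v_2$ or $v_1\sim v_3$ depending on labeling — I will fix the labeling so that $v_1v_2$ is the edge joining the two $4$-vertices, which is exactly the hypothesis ``two adjacent $4$-vertices $v_1$ and $v_2$''.)

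\textbf{(FIX).} I need that for every $v\in\{v_1,v_2\}$ and every list assignment giving $v_1,v_2$ at least $2$ colors each but with $v$ pinned to a single color, $G[Q]=K_2$ is $L$-colorable. Fixing $\varphi(v_1)$ leaves $v_2$ with at least two available colors, so we may pick $\varphi(v_2)\neq\varphi(v_1)$; the case of fixing $v_2$ is symmetric. This is immediate.

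\textbf{(FORB).} Let $I\subseteq\{v_1,v_2\}$ be an $\{\house\}$-forbidding set with $|I|\le k-2=3$. The main point is the structural claim that $I=\{v_1,v_2\}$ is \emph{not} $\{\house\}$-forbidding: adding a new vertex $x$ adjacent to both $v_1$ and $v_2$ creates, together with the $4$-cycle $v_1v_2v_3v_4$, a $C_4$ (namely $xv_1v_4v_3v_2x$ uses the path $v_1v_4v_3v_2$) sharing the edge $v_1v_2$ with the triangle $xv_1v_2$ — that is, a \house. So the only $\{\house\}$-forbidding subsets are $\emptyset$, $\{v_1\}$, and $\{v_2\}$. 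For $I=\emptyset$ we need $G[Q]=K_2$ to be colorable from lists of size $\ge 2$, which is trivial; for $|I|=1$, reducing one of the two lists by one still leaves sizes $2,1$ (on the $K_2$), which is colorable, and in any case this case follows from (FIX). This completes the verification.

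The only genuinely delicate point is the structural claim in (FORB) that connecting a new common neighbor to $v_1$ and $v_2$ forces a \house; this relies on using the full $4$-cycle $v_1v_2v_3v_4$ that lives in $G$ (not just in $H-B$), and on $v_3,v_4$ being in the boundary so that the path $v_1v_4v_3v_2$ is available even though we only delete $v_1,v_2$. I would state this carefully, noting that the house is formed by the triangle on $\{x,v_1,v_2\}$ and the $4$-cycle on $\{x,v_1,v_4,v_3\}\cup\{v_2\}$ sharing the edge $xv_1$ — or, more cleanly, that $x,v_1,v_2$ plus the $v_1$--$v_2$ path through $v_3,v_4$ is a $C_4$ (on $x,v_1,v_4,v_3,v_2$) glued along $v_1v_2\cdots$; either framing shows $\mathcal F$ is violated. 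Everything else is routine list-size arithmetic.
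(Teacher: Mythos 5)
Your proof is correct and follows essentially the same route as the paper: the same boundary choice, the same list-size arithmetic giving $v_1,v_2$ two available colors each, the trivial (FIX) on the edge $v_1v_2$, and the observation that $\{v_1,v_2\}$ is the only candidate set of size at least $2$ and is not $\{\house\}$-forbidding. One small remark: your descriptions of the resulting house are garbled (the cycle $xv_1v_4v_3v_2x$ is a $C_5$, not a $C_4$, and there is no $4$-cycle through $x,v_3,v_4$ since $x$ is adjacent only to $v_1,v_2$); the clean statement is simply that the triangle $xv_1v_2$ and the $4$-cycle $v_1v_2v_3v_4$ share the edge $v_1v_2$, which is exactly a \house.
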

   \begin{proof}
   Let $v_1v_2v_3v_4$ be a $4$-cycle with $\deg(v_1)=\deg(v_2)=4$.
   Let $H$ be the graph on $v_1, v_2, v_3, v_4$ and set the boundary $B=\{v_3, v_4\}$.
 
 (FIX): 
 The coloring can be extended as an edge with lists of size $1$ and $2$ is clearly $L$-colorable. 
 
  (FORB): 
  Let $I\subset V(H)\setminus B$ where $|I|\leq 3$. 
  If $|I|\geq 2$, then $I$ is not $\mathcal \{C^+_5\}$-forbidding since connecting a new vertex to any pair of vertices in $H$ always creates a $\house$. 
  It remains to consider the cases where $|I|=1$ but that is implied by (FIX).   
 \end{proof}

    \begin{lemma}
      \normalfont A $4$-cycle $v_1v_2v_3v_4$ with consecutive vertices of degrees $4,5,4$ where $v_4$ has no degree restriction, (RC3), is $(\mathcal F,5)$-boundary-reducible with boundary $\{v_4\}$.
 \end{lemma}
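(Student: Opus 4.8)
The plan is to verify the two defining properties of $(\mathcal F,5)$-boundary-reducibility for the configuration $H$ on the $4$-cycle $v_1v_2v_3v_4$ with $\deg_G(v_1)=\deg_G(v_3)=4$, $\deg_G(v_2)=5$, and boundary $B=\{v_4\}$, so that $H-B$ is the path $v_1v_2v_3$. First I would compute how many colors are ``available'' at each vertex of $H-B$ after coloring outside: since a vertex $x$ retains at least $k-\deg_G(x)+\deg_{H-B}(x)=5-\deg_G(x)+\deg_{H-B}(x)$ colors, and in $H-B$ the degrees are $1,2,1$ for $v_1,v_2,v_3$ respectively, we get at least $5-4+1=2$ colors at $v_1$, at least $5-5+2=2$ colors at $v_2$, and at least $5-4+1=2$ colors at $v_3$. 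So we are $L$-coloring a $P_3$ with all lists of size (at least) $2$.

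For (FIX), I would take any single vertex $v\in\{v_1,v_2,v_3\}$ and reduce its available count to $1$, then argue the $P_3$ is still $L$-colorable. If the fixed vertex is an endpoint, say $v_1$ with one color, color $v_1$, then $v_3$ (two colors, no conflict with $v_1$ since they are non-adjacent in $H-B$), then $v_2$, which still has at least $2$ colors and only two already-colored neighbors $v_1,v_3$—if those got distinct colors there might be a problem, so I would instead color $v_2$ before $v_3$: color $v_1$, then $v_2$ avoiding $\varphi(v_1)$ (possible, $|L(v_2)|\ge 2$), then $v_3$ avoiding $\varphi(v_2)$ (possible, $|L(v_3)|\ge 2$). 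If the fixed vertex is the center $v_2$ with one color, color $v_2$, then each endpoint avoiding $\varphi(v_2)$, which works since each endpoint has $\ge 2$ colors. For (FORB), let $I\subseteq\{v_1,v_2,v_3\}$ be $\mathcal F$-forbidding with $|I|\le k-2=3$. The key observation is that adding a new vertex adjacent to any two of $v_1,v_2,v_3$ creates a forbidden subgraph: two vertices among $\{v_1,v_2,v_3\}$ together with the original $4$-cycle edges and the new apex either form a $K_{2,3}$ (when the two are $v_1$ and $v_3$, both adjacent to $v_2$ and to the apex, and also to $v_4$—giving $K_{2,3}$ with parts $\{v_2\text{ or apex},\ldots\}$; more carefully, $v_1,v_3$ each adjacent to $v_2,v_4,$ apex) or a \house (when the two are adjacent, e.g.\ $v_1v_2$: the apex plus edge $v_1v_2$ is a triangle sharing edge $v_1v_2$ with... actually sharing the edge with the $4$-cycle path). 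Thus $|I|\le 1$, and $|I|=1$ reduces to (FIX) applied with one color removed instead of fixed—which is the same coloring task.

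The main obstacle I anticipate is the (FORB) combinatorial case analysis: I need to check carefully that \emph{every} pair of vertices from $\{v_1,v_2,v_3\}$, when given a common neighbor (the hypothetical apex from the $\mathcal F$-forbidding definition applied inside $G-V(H-B)$, where the boundary vertex $v_4$ is still present), actually completes to a copy of \house or $K_{2,3}$ using the edges already guaranteed in $G$—in particular using the $4$-cycle $v_1v_2v_3v_4$. The pair $\{v_1,v_3\}$ is the delicate one: $v_1$ and $v_3$ have common neighbors $v_2$ and $v_4$ in the $4$-cycle, so adding an apex adjacent to both $v_1$ and $v_3$ yields three internally disjoint $v_1$–$v_3$ paths of length $2$, i.e.\ a $K_{2,3}$. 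The adjacent pairs $\{v_1,v_2\}$ and $\{v_2,v_3\}$ each lie on the $4$-cycle, and an apex adjacent to both closes a triangle on that edge while the rest of the $4$-cycle supplies a $C_4$ through the same edge, producing a \house. Once this is pinned down, the rest is the routine greedy-coloring bookkeeping above; I would present (FIX) and (FORB) concisely as in the preceding lemmas' proofs.

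\begin{proof}
  Let $v_1v_2v_3v_4$ be a $4$-cycle with $\deg(v_1)=4$, $\deg(v_2)=5$, and $\deg(v_3)=4$.
  Let $H$ be the graph on $v_1,v_2,v_3,v_4$ and set the boundary $B=\{v_4\}$, so $H-B$ is the path $v_1v_2v_3$.
  Every vertex of $H-B$ has at least two available colors: indeed, $v_1$ and $v_3$ each have $5-4+1=2$ available colors, and $v_2$ has $5-5+2=2$ available colors.

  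(FIX): Suppose an endpoint, say $v_1$, is fixed. Color $v_1$, then choose a color for $v_2$ different from $\varphi(v_1)$, which is possible as $v_2$ has at least two available colors, and then choose a color for $v_3$ different from $\varphi(v_2)$, which is possible as $v_3$ has at least two available colors. Fixing $v_3$ is symmetric. If the center $v_2$ is fixed, color $v_2$ and then color each of $v_1$ and $v_3$ with an available color distinct from $\varphi(v_2)$.

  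(FORB): Let $I\subset V(H)\setminus B$ where $|I|\leq 3$. If $|I|\geq 2$, then $I$ is not $\mathcal F$-forbidding: adding a new vertex adjacent to two vertices among $v_1,v_2,v_3$ creates a $\house$ when those two vertices are adjacent (the new vertex closes a triangle on an edge of the $4$-cycle, and the remaining two edges of the $4$-cycle yield a $C_4$ through that edge), and creates a $K_{2,3}$ when those two vertices are $v_1$ and $v_3$ (they then have the three common neighbors $v_2$, $v_4$, and the new vertex). It remains to consider the cases where $|I|=1$, but that is implied by (FIX).
\end{proof}
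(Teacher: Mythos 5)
Your proof is correct and follows essentially the same approach as the paper's: greedy extension along the path $v_1v_2v_3$ (each vertex having at least two available colors) for (FIX), and the observation that an apex over any pair from $\{v_1,v_2,v_3\}$ creates a \house or a $K_{2,3}$ for (FORB). You simply spell out the coloring order and the identification of the forbidden subgraphs in more detail than the paper does.
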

 \begin{proof}
Let $v_1v_2v_3v_4$ be a $4$-cycle with $\deg(v_1)=\deg(v_3)=4$ and $\deg(v_2)=5$.
Let $H$ be the graph on $v_1, v_2, v_3, v_4$ and set the boundary $B=\{v_4\}$. 

 (FIX): If we fix the color of one vertex, then the coloring can be extended as the other two vertices have at least two available colors. 
 
 (FORB): Let $I \subset V(H) \setminus B$ where $|I|\leq 3$. 
 If $|I|\ge2$, then $I$ is not $\mathcal{F}$-forbidding, since connecting a new vertex to any pair of vertices in $H$ creates either a $\house$ or a $K_{2, 3}$.
 It remains to consider the cases where $|I|=1$ but that is implied by (FIX).
 \end{proof}

    \begin{lemma}
      \normalfont (RC4) \textit{is weakly $(\mathcal F,5)$-boundary-reducible with empty boundary.}
 \end{lemma}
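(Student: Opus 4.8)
The plan is to take $H$ to be the $4$-cycle itself, with empty boundary, and to verify weak $(\mathcal F,5)$-boundary-reducibility with $\text{Fix}(H)$ equal to the three $5$-vertices. Write the cycle as $v_1v_2v_3v_4$ with $\deg_G(v_1)=\deg_G(v_2)=\deg_G(v_3)=5$ and $\deg_G(v_4)=4$; up to relabelling this is the only way to place one $4$-vertex and three $5$-vertices on a $C_4$. Since $B=\emptyset$ we have $H-B=H$ and $\deg_{H-B}(v_i)=2$ for every $i$, so the number of available colours is $5-\deg_G(v_i)+2$: exactly two colours on each of $v_1,v_2,v_3$ and three colours on $v_4$. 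I set $\text{Fix}(H)=\{v_1,v_2,v_3\}$, which is non-empty as required, so it remains to check (FIX) for these three vertices and (FORB) for all admissible sets $I$.

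For (FIX) the structural fact I will exploit is that opposite vertices of a $C_4$ are non-adjacent, so deleting any one vertex leaves a path whose two ends are coloured with no mutual constraint. Fixing $\varphi(v_1)$ (and symmetrically $\varphi(v_3)$): the vertices $v_2,v_3,v_4$ induce the path $v_2v_3v_4$, the vertex $v_3$ is not adjacent to $v_1$, and after discarding $\varphi(v_1)$ the lists have sizes at least $1,2,2$ along this path, so colour $v_2$, then $v_3$, then $v_4$ greedily. Fixing $\varphi(v_2)$: the vertices $v_1,v_3,v_4$ induce the path $v_1v_4v_3$ with $v_1\not\sim v_3$, and after discarding $\varphi(v_2)$ the lists have sizes at least $1,3,1$ on $v_1,v_4,v_3$, so colour the two non-adjacent ends $v_1,v_3$ first (each is forced) and then $v_4$. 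This gives (FIX) on $\text{Fix}(H)$. It also explains why $v_4$ cannot be added, which is precisely the point of passing to the weak notion: if $\varphi(v_4)$ is fixed to a colour $x$ and the adversary chooses $L(v_1)=\{a,x\}$, $L(v_2)=\{a,b\}$, $L(v_3)=\{b,x\}$ (allowed, since $v_1,v_2,v_3$ each have two available colours), then $\varphi(v_1)=a$ and $\varphi(v_3)=b$ are forced and $v_2$ has no available colour, so ordinary $(\mathcal F,5)$-boundary-reducibility genuinely fails.

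For (FORB) the main observation is that \emph{no} two-element subset of $V(H)$ is $\mathcal F$-forbidding: joining a new vertex to two adjacent cycle-vertices produces a triangle sharing an edge with the $C_4$, i.e.\ a \house, while joining a new vertex to two opposite cycle-vertices produces a $K_{2,3}$. Since a subset of an $\mathcal F$-forbidding set is again $\mathcal F$-forbidding, no $I$ with $|I|\ge2$ can be $\mathcal F$-forbidding, so (FORB) is vacuous for those $I$. Only $|I|\le1$ remains: for $I=\emptyset$ one $L$-colours $C_4$ when the lists have sizes at least $2,2,2,3$ by colouring $v_4$ last (after its two cycle-neighbours it still has an available colour); for $I=\{v_4\}$ all four lists have size at least $2$ and $C_4$ is $2$-choosable; and for $I=\{v_i\}$ with $i\in\{1,2,3\}$ the required assignment class coincides exactly with the one in (FIX) at $v_i$, already handled.

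The colourings above are routine greedy arguments on a path or on $C_4$; the only genuinely substantive steps are locating precisely where the obstruction lives (at the $4$-vertex, which forces the weak version), and checking that every pair of cycle-vertices is $\mathcal F$-forbidden, which is what makes the (FORB) condition essentially free here.
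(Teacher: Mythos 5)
Your proof is correct and takes essentially the same route as the paper's: set $\text{Fix}(H)$ to be the three $5$-vertices, verify (FIX) by greedily colouring the path left after fixing one of them, observe that no pair of cycle vertices is $\mathcal F$-forbidding (adjacent pairs yield \house, opposite pairs yield $K_{2,3}$), and reduce (FORB) to singletons, with the $4$-vertex case settled by $2$-choosability of $C_4$. The explicit list assignment witnessing that the $4$-vertex cannot satisfy (FIX) is a nice touch that the paper only asserts in prose before the lemma.
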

 \begin{proof}
     Let $v_1v_2v_3v_4$ be a $4$-cycle with $\deg(v_1)=4$ and $\deg(v_2)=\deg(v_3)=\deg(v_4)=5$.
     Let $H$ be the graph on $v_1, v_2, v_3, v_4$ and set the boundary $B=\emptyset$.
 
 (FIX):
  Since our aim is to show weak reducibility it is sufficient to show (FIX) only for one vertex. 
  If we fix $v_i$ where $i\neq 1$, then we obtain a $4$-cycle where the list sizes are $1, 2, 2, 3$.

 (FORB): Let $I \subset V(H) \setminus B$ where $|I|\leq 3$. 
 If $|I|\ge2$, then $I$ is not $\mathcal{F}$-forbidding, since connecting a new vertex to any pair of vertices in $H$ creates either a $\house$ or a $K_{2, 3}$.
 It remains to consider the cases where $|I|=1$
 The only case that was not implied by (FIX) is  $I=\{v_1\}$, then we obtain a $4$-cycle where every vertex has two available colors. 
 \end{proof}
    
 \begin{lemma}
      \normalfont (RC5) \textit{is $(\emptyset,5)$-boundary-reducible with empty boundary.}
 \end{lemma}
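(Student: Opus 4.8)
The plan is to take $H$ to be the path $v_1v_2v_3v_4$ on four $4$-vertices, with empty boundary $B=\emptyset$, and to verify both (FIX) and (FORB) directly. Since $\mathcal F=\emptyset$ here, only the ``available colors'' count matters: in $H-B=H$ every vertex $v_i$ has $k-\deg_G(v_i)+\deg_H(v_i)=5-4+\deg_H(v_i)$ available colors, so the endpoints $v_1,v_4$ have $2$ available colors each and the internal vertices $v_2,v_3$ have $3$ available colors each. Thus the task reduces to an elementary list-coloring check on a path $P_4$ with list sizes $(2,3,3,2)$, first with one list shrunk to size $1$ (for (FIX), over every choice of fixed vertex), and then with up to three lists shrunk by $1$ (for (FORB), but since $\mathcal F=\emptyset$ an $\mathcal F$-forbidding set may have size up to $k-2=3$, so we must handle $|I|\le 3$).

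First I would dispatch (FIX). If we fix $v_1$ (symmetrically $v_4$), the remaining list sizes along $v_2v_3v_4$ are $(3,3,2)$; greedily colour $v_4$ (avoiding $v_3$ later is not an issue) — more carefully, colour $v_2$ avoiding $\varphi(v_1)$, then $v_3$ avoiding $\varphi(v_2)$, then $v_4$ avoiding $\varphi(v_3)$, which works since each step has a list larger than the number of forbidden colours. If instead we fix $v_2$ (symmetrically $v_3$), then $v_1$ has $2$ available colours and needs only to avoid $\varphi(v_2)$, while along $v_3,v_4$ the list sizes are $(3,2)$ and we colour $v_3$ avoiding $\varphi(v_2)$ then $v_4$ avoiding $\varphi(v_3)$. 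So (FIX) holds.

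Next I would handle (FORB) for a set $I\subseteq V(H)$ with $|I|\le 3$: reduce the list of each vertex in $I$ by $1$ and show the resulting instance is still colourable. After the reduction the list sizes are at least $(1,2,2,1)$ in the worst case $I=\{v_1,v_2,v_3,v_4\}$ — but $|I|\le 3$, so at least one vertex keeps its full size; regardless, the worst case gives sizes componentwise at least $(1,2,2,1)$. Colour in the order $v_1,v_2,v_3,v_4$: $v_1$ has $\ge 1$ colour; $v_2$ has $\ge 2$ colours and must avoid only $\varphi(v_1)$; $v_3$ has $\ge 2$ colours and must avoid only $\varphi(v_2)$; $v_4$ has $\ge 1$ colour and must avoid only $\varphi(v_3)$, but if $v_4$'s list has size exactly $1$ it was not reduced (since $|I|\le 3$ forces some vertex out of $I$, and if $v_4\in I$ then $v_1\notin I$, making $v_1$'s list size $2$ — then recolour $v_1$ last instead). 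In every case a proper $L$-colouring exists, so (FORB) holds. This confirms that $H$ is $(\emptyset,5)$-boundary-reducible with empty boundary.

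The only mild subtlety — the ``main obstacle'' such as it is — is the bookkeeping in (FORB) when both endpoints lie in $I$ is impossible ($|I|\le 3$), so one endpoint always retains a list of size $2$ and can be coloured last; choosing the colouring order adaptively (peeling off whichever endpoint still has two colours) handles all $\binom{4}{\le 3}$ patterns of $I$ uniformly. No discharging or structural planarity input is needed here; this is purely the local list-coloring verification.
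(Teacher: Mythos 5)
There is a genuine gap in your (FORB) argument. You assert that ``when both endpoints lie in $I$ is impossible ($|I|\le 3$)'' and, in the patch for the last greedy step, that ``if $v_4\in I$ then $v_1\notin I$.'' Both claims are false: since $I$ may be any subset of $\{v_1,v_2,v_3,v_4\}$ of size at most $k-2=3$, the sets $I=\{v_1,v_4\}$, $I=\{v_1,v_2,v_4\}$ and $I=\{v_1,v_3,v_4\}$ are all admissible, and in each of them both endpoints are reduced to a single available color. These are exactly the cases where your strategy breaks: the left-to-right greedy can strand $v_4$ (e.g.\ $\varphi(v_3)$ could be forced onto $v_4$'s unique color), and the adaptive fallback ``recolour the endpoint that still has two colors last'' has no endpoint to fall back on. So the nontrivial instances of (FORB) are precisely the ones your argument declares impossible and skips.

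The lemma is still true, and the repair is short; it is what the paper does. For, say, $I=\{v_1,v_2,v_4\}$ the list sizes are $(1,2,3,1)$: color the forced vertices $v_1$ and $v_4$ first, then $v_2$ avoiding $\varphi(v_1)$, and finally $v_3$, which is not in $I$, retains $3$ available colors, and has only two colored neighbors to avoid. The key observation you are missing is that whenever both endpoints are in $I$, at least one internal vertex is \emph{not} in $I$ and its list of size $3$ absorbs the two constraints from its neighbors; the correct order colors the size-$1$ vertices first and the untouched internal vertex last, rather than peeling from an endpoint. Your (FIX) verification and the computation of the available list sizes $(2,3,3,2)$ are correct and match the paper.
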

  
 \begin{proof}
   Let $v_1,v_2,v_3,v_4$ be consecutive $4$-vertices on a path.
   Let $H$ be the graph on $v_1,v_2,v_3,v_4$ and set the boundary $B=\emptyset$. 
 
 (FIX):  If we fix the color of one vertex $v$, then the coloring can be extended to the remaining vertices on the path as the remaining vertices have at least two available colors prior to fixing the color of $v$.  
 
 (FORB): Let $I \subset V(H) \setminus B$ where $|I|\leq 3$. 
 Up to symmetry it suffices to check the cases $I=\{v_1,v_2,v_3\}$ and $I=\{v_1,v_2,v_4\}$.  
 In the first and second case, we are left with a path with lists of size $1,2,2,2$ and $1,2,3,1$, respectively, in this order. 
 In both cases, the coloring can be extended. 
 \end{proof}

 For each vertex $v$ and each face $f$, let $ch(v)=\deg(v)-2$ and $ch(f)=-2$. 
 By Euler's formula the sum of initial charge is negative: $\sum_{v\in V(G)}(\deg(v)-2)+\sum_{f\in F(G)}(-2)=2|E(G)|-2|V(G)|-2|F(G)|=-4$.
The discharging rules are as follows:

\begin{itemize}
  \item [(D1)] Every $6^+$-vertex sends $2/3$ to each incident face.
  \item [(D2)] Every $5$-vertex sends $2/3$ to each incident $3$-face and sends $1/2$ to each incident $4^+$-face.
  \item [(D3)] Every $4$-vertex sends $2/3$ to each incident $3$-face and sends $1/3$ to each incident $4^+$-face.
\end{itemize}

We now check that each vertex and each face has non-negative final charge.

\begin{lemma}
Each vertex has non-negative final charge. 
\end{lemma}
\begin{proof}
  There are no $3^-$-vertices by (RC1).
  If $v$ is a $6^+$-vertex then $ch^*(v)= \deg(v)-2-\deg(v)\cdot\frac{2}{3}\geq0$.
  If $v$ is a $5$-vertex, then it is adjacent to at most three $3$-faces, since there is no $\house$, so $ch^*(v)\geq 3-3\cdot\frac{2}{3}-2\cdot\frac{1}{2}=0$.
  If $v$ is a $4$-vertex, then it is adjacent to at most two $3$-faces, since there is no $\house$, so $ch^*(v)\geq 2-2\cdot\frac{2}{3}-2\cdot\frac{1}{3}=0$.
\end{proof}

\begin{lemma}\label{first_redistribution}
  Each face has non-negative final charge. 
\end{lemma}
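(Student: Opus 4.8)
I would prove this by a straightforward case analysis on the length $|f|$ of the face $f$, whose initial charge is $ch(f)=-2$; in each case the goal is to show that $f$ receives total charge at least $2$, so that $ch^*(f)\ge 0$. Throughout I would use that, by (RC1), every vertex has degree at least $4$, and that by (D1)--(D3) a $4^+$-vertex sends $2/3$ to each incident $3$-face and at least $1/3$ to each incident $4^+$-face, this last amount being at least $1/2$ if the vertex has degree at least $5$ and exactly $2/3$ if it has degree at least $6$.

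\textbf{The easy lengths.} If $|f|=3$, then all three incident vertices are $4^+$-vertices, each sending $2/3$ to $f$, so $ch^*(f)\ge -2+3\cdot\tfrac23=0$. If $|f|\ge 6$, each incident vertex sends at least $1/3$ to $f$, so $ch^*(f)\ge -2+|f|\cdot\tfrac13\ge 0$. If $|f|=5$ with facial cycle $v_1v_2v_3v_4v_5$, then since deleting any one vertex of a $5$-cycle leaves a path on four vertices, (RC5) forces at most three of $v_1,\dots,v_5$ to be $4$-vertices; hence at least two are $5^+$-vertices, and $ch^*(f)\ge -2+3\cdot\tfrac13+2\cdot\tfrac12=0$.

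\textbf{The main case: $|f|=4$.} Let the facial cycle be $v_1v_2v_3v_4$. All four vertices are $4^+$ by (RC1), and by (RC2) the $4$-vertices among them form an independent set in the $4$-cycle, so there are $0$, $1$, or $2$ of them, the two being opposite in the last case. If there are none, all four send at least $1/2$, for a total of at least $2$. If there is exactly one, say $v_1$, then $v_2,v_3,v_4$ are all $5^+$-vertices, and (RC4) forbids them from all having degree exactly $5$, so at least one of them sends $2/3$; the total is at least $\tfrac13+2\cdot\tfrac12+\tfrac23=2$. If there are exactly two, say $v_1$ and $v_3$, then applying (RC3) to the consecutive triples $v_1v_2v_3$ and $v_3v_4v_1$ rules out $\deg(v_2)=5$ and $\deg(v_4)=5$, so both $v_2$ and $v_4$ have degree at least $6$ and send $2/3$; the total is at least $2\cdot\tfrac13+2\cdot\tfrac23=2$. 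In every case $ch^*(f)\ge 0$.

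\textbf{Main obstacle.} The calculations are routine, but every subcase of the $4$-face (and the three-$4$-vertex subcase of the $5$-face) lands exactly on a received charge of $2$, so there is no slack: the difficulty is to confirm that precisely the configurations (RC2)--(RC5) forbidden for the minimal counterexample are enough to force the needed lower bounds on the degrees of the vertices surrounding a short face. A minor auxiliary point is that one should be able to treat the boundary walk of a $4$- or $5$-face as a genuine cycle; this is the usual mild reduction and does not affect the argument.
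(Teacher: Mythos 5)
Your proof is correct and follows essentially the same route as the paper's: the identical case split on $|f|$, with (RC2)--(RC4) forcing the degree lower bounds around a $4$-face and (RC5) guaranteeing two $5^+$-vertices on a $5$-face (the paper handles all $5^+$-faces in one line, but the reasoning is the same). The only cosmetic difference is your explicit observation that the extremal subcases have no slack, which matches the paper's computations landing exactly on $0$.
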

\begin{proof}
  If $f$ is a $3$-face, then in all situations each vertex on $f$ gives $2/3$ to $f$, so $ch^*(f)=-2+3\cdot\frac{2}{3}=0$. 
   If $f$ is a $4$-face, then $f$ is incident with at most two $4$-vertices by (RC2). 
   We distinguish the following cases based on the number of $4$-vertices incident with $f$. 
  \begin{itemize}
    \item If there are exactly two $4$-vertices, then by (RC2) the $4$-vertices cannot be consecutive on $f$. 
    By (RC3) the other two vertices are $6^+$-vertices. Therefore, $ch^*(f)= -2+2\cdot\frac{2}{3}+2\cdot\frac{1}{3}=0$.
    \item If there is exactly one $4$-vertex, then $f$ is incident with at least one other $6^+$-vertex by (RC4). 
    Therefore, $ch^*(f)\geq -2+\frac{2}{3}+2\cdot\frac{1}{2}+\frac{1}{3}=0$.
    \item If there is no $4$-vertex, then all other vertices have degree at least $5$. Therefore, $ch^*(f)\geq -2+4\cdot\frac{1}{2}=0$.
  \end{itemize}
  If $f$ is a $5^+$-face, then $f$ is incident with five $4^+$-vertices, two of which are $5^+$-vertices, by (RC5), so $ch^*(f)\geq -2+2\cdot\frac{1}{2}+3\cdot\frac{1}{3}=0$.
\end{proof}

\section{Conclusions}
One can see Lemma~\ref{lem:evenImproved} and, in particular, definition of (weak) $(\mathcal{F},k,b)$-resolution as a generalization of degeneracy order.
There, the role of single vertices is replaced by (weak) $(\mathcal{F},k)$-boundary-reducibile configurations.
The resolution can be easily constructed in polynomial time under a mild assumption that the size of the boundary of each reducible configuration is bounded (which is the case of the theorems it is applied to in this paper).
Using the resolution, the list coloring satisfying any request for a single vertex only can be obtained straightforwardly.
However, it is not clear how to reconstruct $\varepsilon$-satisifable coloring for the given request (using the resolution or not) 
even though the existence of such coloring is guaranteed by Lemma~\ref{lem:evenImproved}.
As the flexibility concept has a substantial algorithmic motivation, it would be very interesting to explore its algorithmic potential.

Besides the open questions given in the introduction, we propose some open areas of inquiry.  In addition to further exploring the notion of weak flexibility, we propose two possible directions that align with the general effort to distinguish between flexibility and choosability in the class of planar graphs.  

First, Cohen-Addad, Hebdige, Kr\'aľ, Li, and Salgado~\cite{steinberg-false} constructed a planar graph with neither $C_4$ nor $C_5$ that is not even $3$-colorable, refuting Steinberg's Conjecture (see~\cite{steinberg}).
In this vein, we feel it would be interesting to determine whether it is possible to strengthen Theorem~\ref{C4C5} to graphs without $C_4$ and $C_5$ with lists of size 4.

As pointed out in~\cite{Dvorakmain} it would be nice to narrow the gap between $d$-degenerate graphs and (weighted) $\varepsilon$-flexible graphs with lists of size $k$. Theorem~\ref{thm:degeneracy} shows that $d$-degenerate graphs are $\varepsilon$-flexible when the size of the lists is at least $d+2$, but the bound conjectured in~\cite{Dvorakmain} is $d+1$. We propose a study of outer-planar graphs, which are $2$-degenerate, but it is not known whether lists of size $3$ suffice to achieve $\varepsilon$-flexibility.

\section*{Acknowledgement}

This work was completed in part at the 2019 Graduate Research Workshop in Combinatorics,
which was supported in part by National Science Foundation grants \#1923238 and \#1604458, National Security Agency grant \#H98230-18-1-0017, a generous
award from the Combinatorics Foundation, and Simons Foundation Collaboration Grants \#426971
(to M. Ferrara) and \#315347 (to J. Martin).

Ilkyoo Choi is supported by the Basic Science Research Program through the National Research Foundation of Korea (NRF) funded by the Ministry of Education (NRF-2018R1D1A1B07043049), and also by the Hankuk University of Foreign Studies Research Fund.  Paul Horn is supported by Simons Foundation Collaboration Grant \#525039.  Part of this work was performed while he was enjoying the hospitality of the Yau Mathematical Sciences Center at Tsinghua University in Beijing, China.  Tom\'{a}\v{s} Masa\v{r}\'{i}k received funding from the European Research Council (ERC) under the European Union's Horizon 2020 research and innovation programme Grant Agreement 714704, and from Charles University student grant SVV-2017-260452.  Dr.\ Masa\v{r}\'{i}k completed a part of this work while he was a postdoc at Simon Fraser University in Canada.

\end{document}